\newtheorem{definition}{Definition}
\newtheorem{theorem}{Theorem}
\newtheorem{lemma}{Lemma}
\newtheorem{corollary}{Corollary}
\newtheorem{remark}{Remark}
\def \P{\operatorname{P}}
\begin{document}
\title{Capacity of a Class of Multicast Tree Networks}
\author{\authorblockN{Si-Hyeon Lee and Sae-Young Chung\\}
\authorblockA{Department of EE, KAIST, Daejeon, Korea\\
Email: sihyeon@kaist.ac.kr, sychung@ee.kaist.ac.kr} \thanks{The material in this paper was presented in
part at the Information Theory and Applications Workshop, UCSD, San Diego, CA, USA, January/February 2010, at the IEEE International Symposium on Information Theory, Austin, TX, USA, June 2010, and at the Allerton Conference on Communication, Control, and Computing, Monticello, IL, USA, Sep. 2010. }} \maketitle

\maketitle
\begin{abstract}
In this paper, we characterize the capacity of a new class of single-source multicast discrete memoryless relay networks having a tree topology in which the root node is the source and each parent node in the graph has at most one noisy child node and any number of noiseless child nodes. This class of multicast tree networks includes the class of diamond networks studied by Kang and Ulukus as a special case, where they showed that the capacity can be strictly lower than the cut-set bound.  For achievablity, a novel coding scheme is constructed where each noisy relay employs a combination of decode-and-forward (DF) and compress-and-forward (CF) and each noiseless relay performs a random binning such that codebook constructions and relay operations are independent for each node and do not depend on the network topology. For converse,  a new technique of iteratively manipulating inequalities exploiting the tree topology is used. 
\end{abstract}

\begin{keywords}
Relay network, compress-and-forward, decode-and-forward, diamond network, multicast tree network
\end{keywords}

\IEEEpeerreviewmaketitle


\section{Introduction}\label{sec:introduction}
In this paper, we consider a single-source multicast discrete memoryless relay network in which the source wants to send the same message reliably to multiple destinations with the help of one or more relays. A model of relay networks was introduced by van der Meulen in~\cite{meulen_thesis,Meulen:71}. However, the single-letter capacity characterization has been open even for three-node relay networks, i.e., relay networks having a source, a relay, and a destination. In their seminal paper~\cite{Gamal:79}, Cover and El Gamal developed two fundamental coding strategies for three-node relay networks. One of them is decode-and-forward (DF), where the relay decodes the message and forwards it to the destination, which was shown to be optimal for physically degraded channels \cite{Gamal:79}. DF was generalized for multiple relays in~\cite{XieKumar:05, KramerGastparGupta:05}. In another strategy, compress-and-forward (CF), the relay compresses its received block and sends the compressed information to the destination. CF was shown to achieve the capacity for some classes of relay networks~\cite{Kim:08,Yu:09}.
Recently, CF was generalized to noisy network coding in \cite{Lim:10} for multiple relays, which includes many previous results on relay networks \cite{Gamal:79,Yeung:00,Effros:06,AvestimehrDiggaviTse:11} as special cases. A potentially better strategy is to decode as much as possible and compress the residual information, i.e., a combination of DF and CF~\cite{Gamal:79}. Indeed such a strategy was shown to be optimal by Kang and Ulukus for a certain class of diamond networks in~\cite{KangUlukus:11}, which consists of a source, a noisy relay, a noiseless relay that receives exactly what the source sends, and a destination that has orthogonal finite-capacity links from relays. For this class of diamond networks, it was shown that a combination of DF and CF at the noisy relay is optimal and the cut-set bound is in general loose~\cite{KangUlukus:11}.

In this paper, we show the optimality of a combination of DF and CF for a new class of single-source multicast relay networks with an arbitrary number of nodes, which includes the class of diamond networks in \cite{KangUlukus:11} as a special case. In this class, which we call multicast tree networks, a network has a tree topology in which the root node is the source and each parent node in the graph has at most one noisy child node and any number of noiseless child nodes. We note that the achievability and converse for diamond networks in \cite{KangUlukus:11} cannot be directly generalized to those for our multicast tree networks. First, the codebook constructions and relay operations of the coding scheme in \cite{KangUlukus:11} for diamond networks, which has a single destination,  vary according to the link capacities from relays to the destination. This cannot be used for multicast tree networks since they have arbitrarily many destinations. Next, it would not be easy to generalize the converse proof technique in \cite{KangUlukus:11} for diamond networks, which have only four nodes in three levels, for our multicast tree networks, which have arbitrarily many nodes in arbitrarily high levels. Therefore, for these two reasons, we need new techniques.
The key technical contributions in the achievability and converse in this paper are as follows:
\begin{itemize}
\item Achievability: For the generalization to multicast tree networks, we construct a \emph{robust} coding scheme where codebook constructions and relay operations are independent for each node and do not depend on the network topology. Such a robustness of the coding scheme makes the generalization from a single destination to multiple destinations possible.

\item Converse: To get a very simple min-cut expression, we use a novel technique of iteratively manipulating inequalities, i.e., we recursively reduce a number of inequalities into one using the tree topology.
\end{itemize}

The organization of this paper is as follows. The model of a class of multicast tree networks is presented in Section \ref{sec:model}. In Section \ref{sec:main}, we present lower and upper bounds on the capacity of the class of multicast tree networks and show a condition for these two bounds to coincide. In Section \ref{sec:achievability}, we derive the lower bound by presenting a coding scheme where each noisy relay employs a combination of DF and CF and each noiseless relay performs a random binning. In Section \ref{sec:ub}, the upper bound is shown using a recursion exploiting the tree topology. In Section \ref{sec:diamond}, we present an equivalent capacity expression for diamond networks that shows that without loss of optimality we can construct the coding scheme such that what is compressed after decoding at a noisy relay is a noisy observation of almost uncoded information. The conclusion of this paper is given in Section~\ref{sec:conclusion}.

The following notations will be used in the paper.  For two integers $i$ and $j$, $[i:j]$ denotes the set $\{i,i+1, \ldots, j\}$, $x_{i}^j$ denotes a row vector $(x_{i},x_{i+1}, ...., x_{j})$, and $x^j$ denotes $x_1^j$. $x_{S}$ for a set $S$ denotes a row vector $(x_i:i\in S)$. According to the context, $k$ sometimes denotes the single-element set $\{k\}$ for notational convenience.

In this paper, we follow the notion of $\epsilon$-robustly typical sequence introduced in~\cite{Orlitsky:01}. Let $N_{x^n}(x)$ denote the number of occurrences of $x\in \mathcal{X}$ in the sequence $x^n$. Then, $x^n$ is said to be $\epsilon$-robustly typical (or just typical) for $\epsilon>0$ if for every $x\in \mathcal{X}$,
\begin{align*}
\bigg|\frac{N_{x^n}(x)}{n}-p(x)\bigg|\leq \epsilon p(x) .
\end{align*}
The set of all $\epsilon$-robustly typical $x^n$ is denoted as $T_{\epsilon}(X)$, which is shortly denoted as $T_{\epsilon}$. Similarly, let $N_{x^n, y^n}(x,y)$ denote the number of occurrences of $(x,y)\in \mathcal{X}\times \mathcal{Y}$ in the sequence $(x^n, y^n)$. The sequence $(x^n,y^n)$ is said to be $\epsilon$-robustly typical (or just typical) if
\begin{align*}
\bigg|\frac{N_{x^n,y^n}(x,y)}{n}-p(x,y)\bigg|\leq \epsilon p(x,y)
\end{align*}
for every $(x,y)\in\mathcal{X}\times \mathcal{Y}$. The set of all $\epsilon$-robustly typical $(x^n, y^n)$ is denoted by $T_{\epsilon}(X,Y)$ or $T_{\epsilon}$ in short.

\section{Model}\label{sec:model}
A single-source multicast discrete memoryless relay network of $N$ nodes
\begin{align*}
\left(\mathcal{X}_1\times ... \times \mathcal{X}_N, p(y_1, ..., y_N|x_1, ..., x_N), \mathcal{Y}_1\times ... \times \mathcal{Y}_N \right)
\end{align*}
consists of alphabets $\mathcal{X}_k, \mathcal{Y}_k$ for $k\in[1:N]$ and a collection of conditional probability mass functions $p\left(y_1,...,y_N|x_1,...,x_N\right)$ where $x_k\in \mathcal{X}_k$ and $y_k\in\mathcal{Y}_k$ for $k\in[1:N]$. Let $K$ denote the number of destinations. Let $1$ and $D_d$ denote the source and the set of nodes that forms the $d$-th destination, respectively, and let $\mathcal{Y}_1=\mathcal{X}_{D_d}=\emptyset$ for $d\in [1:K]$. We note that $D_d$ for $d\in [1:K]$ are not necessarily disjoint. Let $D\triangleq \bigcup_{d\in [1:K]}D_d$.

A $\left(2^{nR},n\right)$ code for a single-source multicast discrete memoryless relay network of $N$ nodes consists of a message set $\mathcal{W}_1=[1:2^{nR}]$, a source encoder that assigns a codeword $x_1^n(w_1)$ to each message $w_1\in \mathcal{W}_1$, a set of relay encoders, where encoder $k\in [2:N]\setminus D$ assigns a symbol $x_{k,i}(y_k^{i-1})$ to every received sequence $y_k^{i-1}$ for $i\in [1:n]$, and a set of decoders, where decoder $k\in[1:K]$ assigns an estimate $\hat{w}_{1,k}$ to each received sequence $y_{D_k}^n$. The message $W_1$ is chosen uniformly from the set $\mathcal{W}_1$.
The average probability of error for a $(2^{nR}, n)$ code is given as
\begin{align*}
P_e^{(n)}&\triangleq \P\left\{\hat{W}_{1,d}\neq W_1 \mbox{ for some } d\in [1:K] \right\}.
\end{align*}
A rate $R$ is said to be \emph{achievable} if there exists a sequence of $(2^{nR}, n)$ codes such that $P_e^{(n)}\rightarrow 0$ as $n\rightarrow \infty$. The capacity is the supremum of all achievable rates.

A single-source multicast discrete memoryless relay network is called a multicast tree network if the probability distribution has the form of
\begin{align*}
p\left(y_1,...,y_N|x_1,...,x_N\right)=\prod_{k\in [1:N]}p\left(y_k|x_{p_k}\right) 
\end{align*}
where $p_k$ is called the \emph{parent node} of node $k$ and $k$ is called a \emph{child node} of node $p_k$. A child node is considered to be one level lower than its parent node. A node without a parent node is called the \emph{root node} and a node that has no child node is called a \emph{leaf node}. Let $L_k$ for $k\in [1:N]$ denote the set of leaf nodes that branches out from node $k$. For tree $T$, let $T_k$ for $k\in [1:N]$ denote the \emph{subtree} of $T$ that consists of node $k$ and all of its descendants in $T$. 

In this paper, our goal is to present lower and upper bounds on the capacity of a class of multicast tree networks and to find some tightness conditions of those two bounds. In this class of multicast tree networks,  the source node is the root node, $D_d\subseteq L_1$ for $d\in[1:K]$, and each parent node has at most one noisy child node and any number of noiseless child nodes, i.e., $y_k=x_{p_k}$ if $k$ is a noiseless child node of node $p_k$. Without loss of generality, we assume that $D=L_1$. Let $G_d\triangleq \{k|L_k\cap D_d\neq \emptyset\}$ for $d\in[1:K]$. Let $n_k$ and $M_k$ for $k\in [1:N]$ denote the noisy child node and the set of noiseless child nodes of node $k$, respectively. Let $Z_k$ for $k\in [1:N]$ denote the set of child nodes of node $k$, i.e., $Z_k=n_k\cup M_k$. From now on, we only consider this class of multicast tree networks. See Fig.~\ref{fig:tree_ex}.
\begin{figure}[t]
 \centering
  {\small
\psfrag{1}[c]{$1$}
\psfrag{2}[c]{$2$}
\psfrag{3}[c]{$3$}
\psfrag{4}[c]{$4$}
\psfrag{5}[c]{$5$}
\psfrag{6}[c]{$6$}
\psfrag{7}[c]{$7$}
\psfrag{8}[c]{$8$}
\psfrag{9}[c]{$9$}
\psfrag{10}[c]{$10$}
\psfrag{11}[c]{$11$}
\psfrag{12}[c]{$12$}
\psfrag{13}[c]{$13$}
\psfrag{14}[c]{$14$}

  \includegraphics[width=100mm]{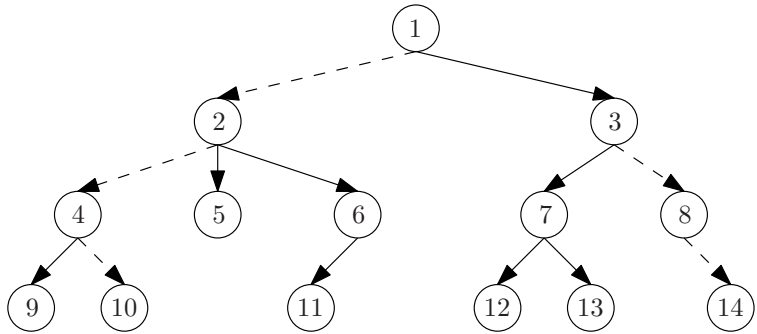}}
  \caption{An example of our multicast tree networks. The solid and dashed lines represent noiseless and noisy links, respectively.  In this example, the parent node of node 3 is node 1 and the child nodes of node 3 are nodes 7 and 8. Node 1 is the root node and nodes 5, 9, 10, 11, 12, 13, and 14 are the leaf nodes.   A destination is a subset of leaf nodes. For instance, destination 1 is the set of nodes 5, 11, 12, and 13, destination 2 is the set of nodes 9, 12, and 14, and destination 3 is node 10. $L_2$ is the set of nodes 5, 9, 10, and 11. $T_3$ is the subtree that consists of nodes 3, 7, 8, 12, 13, and 14.} \label{fig:tree_ex}
\end{figure}

A practical example of our  multicast tree networks is depicted in Fig. \ref{fig:sensor}, which represents a sensor network where a sensor node wants to send a message to the gateway nodes at the boundary connected with infinite-capacity wired links. In this example, each relay node has outgoing links to its neighbor relays such that one of the links is arbitrarily noisy and the others are noiseless. Motivation for assuming noiseless links comes from a practical scenario where a transmitter is using a fixed modulation scheme tuned for the worst link and thus the transmission from the transmitter to the other receivers with better channel qualities looks almost noiseless.
\begin{figure}[t]
 \centering
  {\includegraphics[width=70mm]{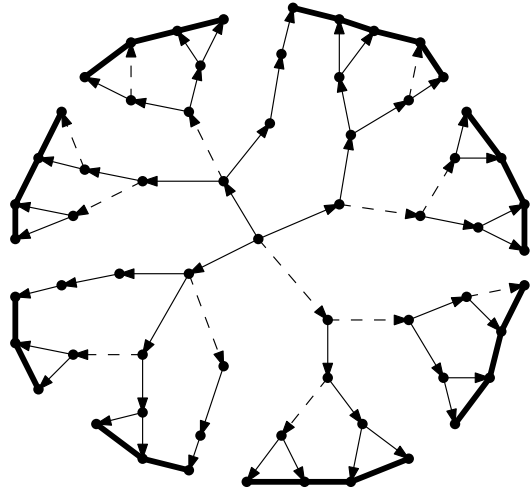}}
  \caption{A sensor network in which a sensor node wants to send a message to gateway nodes at the boundary connected with infinite-capacity wired links. The solid and dashed lines represent noiseless and noisy links, respectively, and thick lines at the boundary represent infinite-capacity wired links. } \label{fig:sensor}
\end{figure}

\section{Main Results for Multicast Tree Networks} \label{sec:main}
Let us present lower and upper bounds on the capacity of multicast tree networks.
\begin{theorem} \label{thm:tree_multicast}
The capacity $C$ of multicast tree networks is lower- and upper-bounded as
\begin{align}
C&\geq \max_{\prod_{k\in [1:N]}p(u_k,x_k)p(\hat{y}_{n_k}|u_k,y_{n_k})} \min_{d\in[1:K]}\min_{S_d} \sum_{k\in A_{S_d,d}}I(U_{k};Y_{n_k})+H(X_{k}|U_{k})\cr
&~~~~~~~~~~~~~+\sum_{k\in B_{S_d,d}}I(U_{k};Y_{n_k})+I(X_{k};\hat{Y}_{n_k}|U_{k})-\sum_{k\in C_{S_d,d}}I(Y_{n_k};\hat{Y}_{n_k}|U_{k}, X_{k}) \label{eqn:multicast_ach}\\
C&\leq \max_{\prod_{k\in [1:N]}p(u_k,x_k)} \min_{d\in[1:K]}\max_{\prod_{k \in [1:N]}p(\hat{y}_{n_k}|u_k,y_{n_k})}\min_{S_d} \sum_{k\in A_{S_d,d}}I(U_{k};Y_{n_k})+H(X_{k}|U_{k})\cr
&~~~~~~~~~~~~~+\sum_{k\in B_{S_d,d}}I(U_{k};Y_{n_k})+I(X_{k};\hat{Y}_{n_k}|U_{k})-\sum_{k\in C_{S_d,d}}I(Y_{n_k};\hat{Y}_{n_k}|U_{k}, X_{k})\label{eqn:multicast_cov}
\end{align}
over all cuts $S_d\subset G_d$ such that $1\in S_d$, $D_d\subseteq S_d^c$, $M_k\cap G_d\subset S_d$ if $n_k\in S_d$, and $p_k\in S_d$ if $k\in S_d$ with cardinalities of alphabets such that
\begin{subequations}\label{eqn:cardinality}
\begin{align}
|\mathcal{U}_k|&\leq |\mathcal{X}_k|+4\\
|\mathcal{\hat{Y}}_{n_k}|&\leq |\mathcal{U}_k||\mathcal{Y}_{n_k}|+2\leq |\mathcal{X}_k||\mathcal{Y}_{n_k}|+4|\mathcal{Y}_{n_k}|+2
\end{align}
\end{subequations}
for $k\in [1:N]$.
Here, $A_{S_d,d}, B_{S_d,d},$ and $C_{S_d,d}$ for $d\in [1:K]$ denote the following disjoint subsets of $S_d$.
\begin{align*}
A_{S_d,d}&\triangleq\{k|k\in S_d, Z_k \subseteq S_d^c, M_k\cap G_d\neq \emptyset\}\\
B_{S_d,d}&\triangleq\{k|k\in S_d, n_k\in S_d^c, M_k\cap G_d\subset S_d,  n_k\cap G_d\neq \emptyset \}\\
C_{S_d,d}&\triangleq\{k|k\in S_d, Z_k\cap G_d\subset S_d\}
\end{align*}
See Table~\ref{table:cases}.
\begin{table*}
\caption{Classification of $k\in {S_d}$ into $A_{S_d,d}$, $B_{S_d,d}$, and $C_{S_d,d}$}\begin{center}
  \begin{tabular}{|c c|c|c|c|}
    \hline
    \multicolumn{2}{|c|}{\multirow{2}{*}{\backslashbox{$M_k\cap G_d$}{$n_k\cap G_d$}}} &\multicolumn{2}{|c|}{$n_k\cap G_d\neq \emptyset$} &\multirow{2}{*}{$n_k\cap G_d=\emptyset$}\\
    \cline{3-4}
    &&$n_k\cap G_d\subset S_d$ & $n_k\cap G_d\subseteq S_d^c$& \\
    \hline
    \multicolumn{1}{|c|}{\multirow{2}{*}{$M_k\cap G_d\neq \emptyset$}} & $M_k\cap G_d\subset S_d$ & $k\in C_{S_d,d}$ & $k\in B_{S_d,d}$ & $k\in C_{S_d,d}$\\
    \cline{2-5}
    \multicolumn{1}{|c|}{} &$M_k\cap G_d\subseteq S_d^c$ & -- & $k\in A_{S_d,d}$ & $k\in A_{S_d,d}$\\
    \hline
    \multicolumn{2}{|c|}{$M_k\cap G_d= \emptyset$}&$k\in C_{S_d,d}$& $k\in B_{S_d,d}$& -- \\
    \hline
      \end{tabular}\label{table:cases}
\end{center}
``--" indicates that corresponding cases do not happen for a cut $S_d$ of interest.
\end{table*}

\end{theorem}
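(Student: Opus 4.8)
The plan is to establish the two bounds separately, both organized around the same per-node auxiliary variables: a cloud center $U_k$ carrying the decode-and-forward (DF) layer transmitted by node $k$, the full channel input $X_k$ superimposed on $U_k$, and a compression variable $\hat{Y}_{n_k}$ that the noisy child $n_k$ produces from its observation $Y_{n_k}$ after having decoded $U_k$ (hence the conditioning $p(\hat{y}_{n_k}\mid u_k,y_{n_k})$). The three node-classes $A_{S_d,d}$, $B_{S_d,d}$, $C_{S_d,d}$ should each correspond to a distinct mode in which information crosses a cut $S_d$: a fully revealed noiseless link, contributing $H(X_k\mid U_k)$ on top of the DF rate $I(U_k;Y_{n_k})$, for $A$; a DF-plus-compressed noisy link, contributing $I(U_k;Y_{n_k})+I(X_k;\hat Y_{n_k}\mid U_k)$, for $B$; and a compression-overhead term $-I(Y_{n_k};\hat Y_{n_k}\mid U_k,X_k)$ for $C$, where the compressed description does not have to leave $S_d$. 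Matching these three modes exactly in both directions is the organizing principle of the whole argument.

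For the lower bound I would build a \emph{robust} block-Markov scheme in which every node's codebook and operation is fixed independently of the topology and of the destinations. For each $k$ I generate $U_k^n$ i.i.d.\ under $p(u_k)$ and superimpose $X_k^n$ under $p(x_k\mid u_k)$, and for the compression I generate $\hat Y_{n_k}^n$ conditioned on $(U_k,Y_{n_k})$ with Wyner--Ziv-style binning. Each noisy relay performs the hybrid DF/CF operation: it decodes the cloud center $U_k$ of its parent from $Y_{n_k}$ and compresses the residual into a bin index; each noiseless relay, which receives $X_k$ noiselessly, applies a random binning and forwards the resulting index. Each destination $d$ jointly decodes the message from the observations $Y_{D_d}$ at its leaf nodes, using the descriptions relayed down to them. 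The error analysis then reduces to requiring, for every destination $d$ and every admissible cut $S_d$, that the total rate delivered across the cut exceed $R$; carefully accounting for what each node transmits across $S_d$ produces precisely the sum over $A_{S_d,d}$, $B_{S_d,d}$, $C_{S_d,d}$ in \eqref{eqn:multicast_ach}. Topology-blind robustness is exactly what lets a single codebook work for all destinations simultaneously.

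For the upper bound I would start from Fano's inequality, writing $nR\le I(W_1;Y_{D_d}^n)+n\epsilon_n$ for each destination $d$, and identify single-letter auxiliaries $U_{k,i}$ and $\hat Y_{n_k,i}$ from the code, the latter built from the noisy child's past (and future) observations so that the required Markov and compression structure holds. The heart is the recursion: exploiting the product form $p(y_1,\dots,y_N\mid x_1,\dots,x_N)=\prod_k p(y_k\mid x_{p_k})$, I would bound the information that a subtree $T_k$ delivers in terms of the contribution of node $k$ plus those of the subtrees rooted at its children, and then unroll this recursion from the leaves toward the root. This collapses the many per-node inequalities into a single minimum over cuts $S_d$, yielding \eqref{eqn:multicast_cov}; the inner $\max$ over $p(\hat y_{n_k}\mid u_k,y_{n_k})$, placed inside the minimization over destinations, arises because the converse bounds each destination's Fano inequality with its own choice of compression auxiliary. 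The cardinality bounds \eqref{eqn:cardinality} would then follow from a standard support-lemma argument applied node by node.

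The main obstacle, in both directions, is the tree-wide book-keeping rather than any single inequality. On the achievability side it is proving that one topology-blind codebook meets the cut condition for every destination at once, and that the three node-classes partition each cut correctly as in Table~\ref{table:cases}. On the converse side it is setting up the auxiliary $\hat Y_{n_k,i}$ and the subtree recursion so that the iterated reduction terminates in exactly the $A/B/C$ decomposition, with the max--min--max--min ordering surviving each step of the unrolling. I expect this recursive inequality manipulation in the converse to be the most delicate part of the proof.
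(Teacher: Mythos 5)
Your overall architecture matches the paper's on both sides: for achievability, per-node superposition codebooks ($U_k$ cloud centers with $X_k$ satellites), covering-based compression $\hat{Y}_{n_k}$ generated conditionally on $U_k$, random binning at noiseless relays, recursive joint-typicality decoding at each destination, and a cut-based error analysis — the paper formalizes your ``total rate across the cut'' step as a supporting-rate recursion (Lemma~\ref{lemma:support}) followed by Fourier--Motzkin elimination, which is exactly the book-keeping you anticipate. Two mislabels are worth noting but are not fatal: the scheme is \emph{not} block-Markov — the codebooks are mutually independent across nodes (joint law $\prod_k p(u_k,x_k)$, no coherent cooperation between a node and its parent), each node simply processing its received block and forwarding; and the compression index is not Wyner--Ziv binned against decoder side information, but is carried as part of the next hop's index and passed through the random mapping $\gamma_k$.

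The genuine gap is in the converse, at precisely the step you flag as delicate but resolve incorrectly. You propose to build $\hat{Y}_{n_k,i}$ from the noisy child's own past (and future) observations. The paper instead sets $U_{k,i}\triangleq(X_{k,i+1}^n,Y_{n_k}^{i-1})$ but, crucially, $\hat{Y}_{n_k,i}\triangleq Y^n_{L_{n_k}\cap D_d}$ — the \emph{entire block of observations at the destination leaves} in the subtree of $n_k$, not any function of $Y_{n_k}$ itself. This destination-side choice is what makes the whole machinery work: it yields the required per-letter factorization $p(u_{k,i},x_{k,i})p(y_{n_k,i}|x_{k,i})p(\hat{y}_{n_k,i}|u_{k,i},y_{n_k,i})$, it turns (\ref{eqn:cov_basic2}) into a bound on $I(X_k^n;Y^n_{L_{n_k}\cap D_d})$ and (\ref{eqn:cov_basic3}) into an exact identity for $I(Y_{n_k}^n;Y^n_{L_{n_k}\cap D_d}|X_k^n)$, and — most importantly — it is what lets the tree recursion of Lemma~\ref{lemma:coverse_recursive} propagate, since the multi-letter quantities $I(X_k^n;Y^n_{L_k\cap D_d})$ telescope down the tree exactly because $\hat{Y}$ represents what must ultimately reach the leaves. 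With $\hat{Y}_{n_k,i}$ built from $Y_{n_k}$'s samples, the compression terms never connect to the Fano quantity $I(X_1^n;Y^n_{D_d})$, and the $B$- and $C$-class recursions (\ref{eqn:cov_rec_2}) and (\ref{eqn:cov_rec_3}) have no analogue; your identification would also fail to explain the max--min ordering you correctly describe, since it is precisely the dependence of $Y^n_{L_{n_k}\cap D_d}$ on $d$ that forces $\max_{\prod_k p(\hat{y}_{n_k}|u_k,y_{n_k})}$ inside $\min_{d}$. (A cosmetic point: the paper unrolls the recursion from the root toward the cut boundary, not leaves-to-root.) So: right architecture throughout, but the converse hinges on this one non-obvious auxiliary identification, which your proposal misses.
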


\begin{remark}
In Theorem~\ref{thm:tree_multicast}, a cut $S_d$ of interest for destination $d\in [1:K]$ satisfies that $p_k\in S_d$ if $k\in S_d$ and $M_k\cap G_d\subset S_d$ if $n_k\in S_d$ in addition to that $1\in S_d$ and $D_d\subseteq S_d^c$. This additional condition signifies that node $p_k$ can decode whatever node $k$ can and a node in $M_k$ can decode whatever node $n_k$ can.
\end{remark}

We can see that the lower and upper bounds in Theorem~\ref{thm:tree_multicast} meet when the maximizing distribution of $\prod_{k \in [1:N]}p(\hat{y}_{n_k}|u_k,y_{n_k})$ is independent of destinations. The following corollary presents a class of such multicast tree networks. Let $a_d$ for $d\in [1:K]$ denote the node at the lowest level in the set $\{k|D_d\subseteq L_k\}$. The proof is in Appendix \ref{appendix:tightness}.
\begin{corollary}\label{corollary:tightness}
If $L_{a_i}\cap D_j=\emptyset$ for all $i,j\in [1:K]$ such that $i\neq j$, the lower and upper bounds in Theorem \ref{thm:tree_multicast}  coincide.
\end{corollary}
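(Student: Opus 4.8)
The plan is to read both bounds in Theorem~\ref{thm:tree_multicast} as the same quantity with $\max_q$ and $\min_d$ in opposite orders, and to show these orders may be interchanged under the hypothesis. Fix a maximizing input distribution $p=\prod_k p(u_k,x_k)$ and write $g_d(q)=\min_{S_d}f(d,S_d,p,q)$ for the inner cut-minimum, where $q=\prod_k p(\hat y_{n_k}|u_k,y_{n_k})$ and $f$ denotes the bracketed sum in \eqref{eqn:multicast_ach}. Then the lower bound is $\max_q\min_d g_d(q)$ and the upper bound is $\min_d\max_q g_d(q)$; one direction is the max--min inequality, so it remains to prove $\min_d\max_q g_d(q)\le\max_q\min_d g_d(q)$.

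First I would record the structural consequence of the assumption. If some $a_i$ were an ancestor of $a_j$ for $i\neq j$, then $D_j\subseteq L_{a_j}\subseteq L_{a_i}$ would force $L_{a_i}\cap D_j=D_j\neq\emptyset$, contrary to hypothesis; hence no $a_d$ is an ancestor of another and the subtrees $T_{a_d}$ are pairwise disjoint.

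Next I would separate the compression variables. Using disjointness, each noisy child $n_k$ is of exactly one of two kinds. If $n_k$ lies in some $T_{a_d}$ then $L_{n_k}\subseteq L_{a_d}$, so $L_{n_k}\cap D_{d'}=\emptyset$ for all $d'\neq d$ and thus $n_k\in G_{d'}$ only for $d'=d$: such a \emph{private} compressor enters $g_d$ alone. Every remaining noisy child is a \emph{backbone} node lying above all the $a_d$ it reaches, and for each relevant $d$ all of $D_d$ sits in its single branch $n_k$, giving $Z_k\cap G_d=\{n_k\}$ and $M_k\cap G_d=\emptyset$. Grouping $q$ accordingly, $g_d$ depends only on the backbone variables $q^{\mathrm{b}}$ and on $d$'s own private variables $q^{(d)}$, since any $q^{(d')}$ with $d'\neq d$ sits at a node outside $G_d$. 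As the $q^{(d)}$ are independent across $d$, for each fixed $q^{\mathrm{b}}$ the separable identity $\max_{q^{(1)},\dots,q^{(K)}}\min_d g_d=\min_d\max_{q^{(d)}}g_d$ holds, so writing $\tilde g_d(q^{\mathrm{b}})=\max_{q^{(d)}}g_d$ reduces the claim to interchanging $\max_{q^{\mathrm{b}}}$ and $\min_d$ in $\tilde g_d$; in particular, if there are no backbone noisy nodes the bounds coincide at once.

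The remaining step, which I expect to be the main obstacle, handles the shared backbone by the tree recursion of Section~\ref{sec:ub}. Peeling backbone nodes from the root downward, at each such node the single-branch structure $Z_k\cap G_d=\{n_k\}$ makes its compressor enter every relevant $\tilde g_d$ only through $\min$ of a forward term $I(U_k;Y_{n_k})+I(X_k;\hat Y_{n_k}|U_k)$, increasing in the compressor, and a residual $-I(Y_{n_k};\hat Y_{n_k}|U_k,X_k)$ plus the downstream cut value, decreasing in it; because $\tilde g_d$ is monotone in that destination-specific downstream value while the forward term is common to all destinations served by the node, the maximization over this single compressor commutes with the $\min$ over those destinations. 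Iterating this monotone commutation down the tree pulls $\max_{q^{\mathrm{b}}}$ through $\min_d$, yielding $\min_d\max_q g_d\le\max_q\min_d g_d$ and hence equality of the two bounds.
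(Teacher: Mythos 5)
Your structural groundwork is right (the $T_{a_d}$ are pairwise disjoint, and a backbone noisy child $n_k$ that serves destination $d$ satisfies $Z_k\cap G_d=\{n_k\}$, $M_k\cap G_d=\emptyset$), but the separation step that carries your whole argument is false. The cut value in Theorem~\ref{thm:tree_multicast} charges $-I(Y_{n_k};\hat{Y}_{n_k}|U_k,X_k)$ at every $k\in C_{S_d,d}$, and Table~\ref{table:cases} explicitly places $k$ in $C_{S_d,d}$ even when $n_k\cap G_d=\emptyset$. Consequently, when $a_{d'}$ is the noisy child of a node $k$ that lies on another destination's route ($k\in G_d$, $d\neq d'$ --- note $k\notin T_{a_{d'}}$, so this is exactly the ``top'' private compressor of $d'$), the compressor $p(\hat{y}_{a_{d'}}|u_k,y_{a_{d'}})$ enters $g_d$ with a negative sign through every cut $S_d$ containing $k$ together with $M_k\cap G_d$. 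A minimal example: root $1$ with noisy child $2$ having noiseless leaf children $4,5$ ($D_1=\{4,5\}$, so $a_1=2$) and noiseless child $3$ with leaf child $6$ ($D_2=\{6\}$). The hypothesis of the corollary holds, there are no backbone noisy nodes at all in your taxonomy, yet the cut $S_2=\{1,3\}$ puts $1\in C_{S_2,2}$ and gives $g_2=\min\{I(U_1;Y_2)+H(X_1|U_1),\,H(X_3|U_3)-I(Y_2;\hat{Y}_2|U_1,X_1)\}$, which depends on the $d{=}1$-private compressor; so your separable identity $\max_{q^{(1)},\dots,q^{(K)}}\min_d g_d=\min_d\max_{q^{(d)}}g_d$ does not apply, and its conclusion can actually fail at a fixed $p$ (take $U_1=\emptyset$ and $H(X_3|U_3)$ small: destination $1$ wants $\hat{Y}_2=Y_2$, destination $2$ wants $\hat{Y}_2=\emptyset$, and $\max_q\min_d g_d<\min_d\max_q g_d$).

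The backbone step fails for the same underlying reason, and additionally the claimed ``monotone commutation'' is not a valid principle: destinations with $n_k\in G_d$ see $g_d$ as $\min$ of a term increasing in the compressor and a term decreasing in it, while destinations routed through noiseless children of $k$ see only the decreasing term; for increasing $f$ and decreasing $h$ one has in general $\max_t\min\{f(t),h(t)\}<\min\{\max_t f(t),\max_t h(t)\}$ (e.g.\ $f(t)=t$, $h(t)=1-t$), so monotonicity alone cannot pull $\max_{q^{\mathrm{b}}}$ through $\min_d$. The interchange over $q$ at a \emph{fixed} $p$ is simply not true; it becomes true only after re-optimizing $p$, because the resolution of the tension is that above the $a_d$'s every route is a chain, on which pure DF --- $U_k=X_k$, $\hat{Y}_{n_k}=\emptyset$, and $p(x_k)$ capacity-achieving for the link to the route child --- is simultaneously optimal for all destinations and makes every negative leakage term vanish. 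This is exactly how the paper proceeds (Appendix~\ref{appendix:tightness}): it bypasses any max--min interchange, plugs the explicit product distribution (single-destination-optimal inside each disjoint $T_{a_d}$, DF above) into the lower bound~\eqref{eqn:multicast_ach} to get $C\geq\min_{d}C(1,d)$, and observes that the upper bound~\eqref{eqn:multicast_cov} is trivially at most $\min_d C(1,d)$ by Corollary~\ref{corollary:cutset}. Your plan could be repaired by proving that the upper bound is attained by a $p$ with this DF structure on and above the backbone, but that assertion is the substance of the proof, and your sketch does not supply it.
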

Corollary \ref{corollary:tightness} says that the lower and upper bounds meet when each set of nodes forming a destination is included in a disjoint subtree. For example, the lower and upper bounds for the multicast tree network represented in Fig.~\ref{fig:tree_ex} meet when destination 1 is the set of nodes 5, 9, 10, and 11, destination 2 is the set of nodes 12 and 13, destination 3 is node 14.

For the single destination case, the lower and upper bounds in Theorem \ref{thm:tree_multicast} coincide trivially. In this case, the following corollary gives a simpler capacity expression.
\begin{corollary} \label{corollary:cutset}
For tree networks with a single destination, the capacity is given as
\begin{align}
\max \min_S I( U_S;Y_{S^c} \setminus   X_S ) + I(X_S;\hat{Y}_{S^c}|U_S) - I(Y_S;\hat{Y}_S|U_S, X_S) \label{eqn:cap_theorem}
\end{align}
where the minimization is over all cuts $S\subset [1:N]$ such that $1\in S$, $D\subseteq S^c$, $M_k\subset S$ if $n_k\in S$, and $p_k\in S$ if $k\in S$, and the maximization is over the joint distribution of
\begin{align}
\prod_{k\in [1:N]}p(u_k,x_k)p(\hat{y}_{n_k}|u_k,y_{n_k})\label{eqn:joint_distr_general}
\end{align}
with cardinalities of alphabets satisfying (\ref{eqn:cardinality}) for $k\in [1:N]$. In (\ref{eqn:cap_theorem}), $\hat{Y}_j=X_k$ for $k\in [1:N]$ and $j\in M_k$ and $Y_{S^c} \setminus   X_S$ denotes the set
\begin{align*}
\{Y_j|j\in S^c, j\notin M_k \mbox{ for all } k\in S\}.
\end{align*}
\end{corollary}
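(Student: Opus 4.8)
The plan is to obtain~(\ref{eqn:cap_theorem}) directly from Theorem~\ref{thm:tree_multicast} by specializing to $K=1$ and then collapsing the node-indexed summation into the three compact mutual-information terms. First I would note that with a single destination the index $d$ is fixed, so the minimization $\min_{d\in[1:K]}$ is vacuous and the inner maximization over $\prod_k p(\hat{y}_{n_k}|u_k,y_{n_k})$ in~(\ref{eqn:multicast_cov}) commutes out and merges with the maximization over $\prod_k p(u_k,x_k)$; hence the lower bound~(\ref{eqn:multicast_ach}) and the upper bound~(\ref{eqn:multicast_cov}) coincide and equal a single $\max\min_S$ over the joint law~(\ref{eqn:joint_distr_general}). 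Since $D=L_1$ gives $G=[1:N]$, the index sets reduce to $A_{S}=\{k\in S:Z_k\subseteq S^c,\,M_k\neq\emptyset\}$, $B_{S}=\{k\in S:n_k\in S^c,\,M_k\subseteq S,\,n_k\neq\emptyset\}$, and $C_{S}=\{k\in S:Z_k\subseteq S\}$, and the cut constraints become exactly those stated in the corollary.

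The core is then a per-cut identity: for each admissible cut $S$ and each law~(\ref{eqn:joint_distr_general}), the summation over $A_{S},B_{S},C_{S}$ equals $I(U_S;Y_{S^c}\setminus X_S)+I(X_S;\hat{Y}_{S^c}|U_S)-I(Y_S;\hat{Y}_S|U_S,X_S)$. I would prove this by decomposing each compact term into single-node pieces, using three structural facts: the pairs $\{(U_k,X_k)\}_{k\in[1:N]}$ are mutually independent, the channel factors as $\prod_k p(y_k|x_{p_k})$ over the tree, and $Y_j=\hat{Y}_j=X_k$ for $j\in M_k$. For the first term, any output $Y_j$ with $p_j\in S^c$ is a function of $X_{S^c}$ and fresh noise, hence independent of $U_S$, so only the boundary noisy outputs survive and $I(U_S;Y_{S^c}\setminus X_S)=\sum_{k\in S:\,n_k\in S^c}I(U_k;Y_{n_k})$. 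For the second term only cut-crossing compressions matter; a crossing noiseless child reveals $X_k$ exactly and contributes $H(X_k|U_k)$, whereas a node all of whose noiseless children remain in $S$ but whose noisy child leaves contributes the compress-and-forward term $I(X_k;\hat{Y}_{n_k}|U_k)$. For the third term every noiseless compression inside $S$ is a deterministic function of $X_S$, leaving $I(Y_S;\hat{Y}_S|U_S,X_S)=\sum_{k\in S:\,n_k\in S}I(Y_{n_k};\hat{Y}_{n_k}|U_k,X_k)$.

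I would then match these pieces against $A_S,B_S,C_S$. The constraint ``$M_k\subseteq S$ whenever $n_k\in S$'' forces $\{k\in S:n_k\in S\}=C_S$, so the negative pieces are exactly the $C_S$-sum $-\sum_{k\in C_S}I(Y_{n_k};\hat{Y}_{n_k}|U_k,X_k)$. A node with $n_k\in S^c$ contributes $I(U_k;Y_{n_k})$ from the first term and then splits according to its noiseless children: if at least one lies in $S^c$ (an $A_S$-node) it adds $H(X_k|U_k)$, and if they all lie in $S$ (a $B_S$-node) it adds $I(X_k;\hat{Y}_{n_k}|U_k)$. Summing reproduces the $A_S$- and $B_S$-sums verbatim, which establishes the identity and hence~(\ref{eqn:cap_theorem}).

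The step I expect to be most delicate is making the single-node decompositions rigorous, that is, justifying each splitting of a joint mutual information into a sum by invoking the independence of the auxiliaries together with the tree Markov structure, and correctly converting the noiseless links through $\hat{Y}_j=X_k$ into the entropy terms $H(X_k|U_k)$. A related subtlety is confirming that both minimizations range over the same cut family: the four listed constraints do not by themselves forbid a node with $n_k\in S^c$ whose noiseless children straddle the cut, and such a node is not classified by Table~\ref{table:cases}. I would address this by restricting, as in Theorem~\ref{thm:tree_multicast}, to cuts whose nodes are all classified by Table~\ref{table:cases}, equivalently those that do not sever a noiseless link except when an entire subtree crosses, and checking that this restriction is without loss for the minimum in~(\ref{eqn:cap_theorem}) since severing a noiseless link forces the full-rate term $H(X_k|U_k)$ across the cut.
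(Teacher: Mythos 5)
Your proposal is correct and takes essentially the same route as the paper: the paper's proof likewise specializes Theorem~\ref{thm:tree_multicast} to the single-destination case (where the bounds coincide trivially) and verifies, per cut, exactly your three decompositions, namely $I(U_S;Y_{S^c}\setminus X_S)=\sum_{k\in A_{S,1}\cup B_{S,1}}I(U_k;Y_{n_k})$, $I(X_S;\hat{Y}_{S^c}|U_S)=\sum_{k\in A_{S,1}}H(X_k|U_k)+\sum_{k\in B_{S,1}}I(X_k;\hat{Y}_{n_k}|U_k)$ (via $I(X_k;X_k,\hat{Y}_{n_k}|U_k)=H(X_k|U_k)$ for crossing noiseless children), and $I(Y_S;\hat{Y}_S|U_S,X_S)=\sum_{k\in C_{S,1}}I(Y_{n_k};\hat{Y}_{n_k}|U_k,X_k)$, all consequences of the product form of~(\ref{eqn:joint_distr_general}) and the tree factorization. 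Your extra attention to cuts with straddling noiseless children makes explicit a restriction the paper leaves implicit (Table~\ref{table:cases} classifies only non-straddling nodes), so it refines rather than alters the argument.
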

\begin{proof}
For a cut $S$ of interest, we have
\begin{align*}
I(U_S;Y_{S^c}\backslash X_S) &=  \sum_{k\in A_{S,1}\cup B_{S,1}} I(U_k;Y_{n_k}) \cr
I(X_S;\hat{Y}_{S^c}|U_S) &=   \sum_{k\in A_{S,1}} I(X_k;X_k,\hat{Y}_{n_k}|U_k) +    \sum_{k\in B_{S,1}} I(X_k;\hat{Y}_{n_k}|U_k)\cr
&=\sum_{k\in A_{S,1}} H(X_k|U_k) + \sum_{k\in B_{S,1}} I(X_k;\hat{Y}_{n_k}|U_k)\cr
I(Y_S;\hat{Y}_S|U_S,X_S)&=  \sum_{k\in C_{S,1}}I(Y_{n_k};\hat{Y}_{n_k}|U_k, X_k)
\end{align*}
from the joint distribution (\ref{eqn:joint_distr_general}), which concludes the proof.
\end{proof}
Here $U$ corresponds to the part of a message intended to be decoded by a noisy relay and $\hat{Y}$ corresponds the compressed version of a received block.

In contrast, only CF is performed at relays in noisy network coding~\cite{Lim:10}, whose achievable rate for general single-source single-destination discrete memoryless relay networks is given as
\begin{align}
\max\min_S  I(X_S; \hat{Y}_{S^c} , Y_D|X_{S^c},Q) - I(Y_S;\hat{Y}_S|X^N,  \hat{Y}_{S^c} , Y_D,Q)\label{eqn:noisy_network}
\end{align}
where the minimization is over all cuts $S\subset [1:N]$ such that $1\in S$ and $D\subseteq S^c$ and the maximization is over the joint distribution of
\begin{align*}
p(q)\prod_{k\in [1:N]} p(x_k|q)p(\hat{y}_k|x_k, y_k, q).
\end{align*}
Note that (\ref{eqn:cap_theorem}) and (\ref{eqn:noisy_network}) are somewhat similar especially the parts involving $\hat{Y}$'s but (\ref{eqn:cap_theorem}) includes $U$'s due to DF.

\section{Achievability}\label{sec:achievability}
Fix a joint distribution of (\ref{eqn:joint_distr_general}).  Fix $\epsilon''>\epsilon'>0$ and fix $r_{k, a}\geq 0, r_{k,b}\geq 0$, and $r_{n_k,v}\geq 0$ for $k\in [1:N]\setminus D$.
\subsubsection{Codebook generation} For $k\in [2:N]$, the index set $\mathcal{W}_k$ of node $k$ is defined as
\begin{align*}
\mathcal{W}_k \triangleq
\begin{cases}
[1:2^{nr_{p_k,a}}]\times [1:2^{nr_{k,v}}] &\mbox{for $k=n_{p_k}$}\\
[1:2^{nr_{p_k,a}}]\times [1:2^{nr_{p_k,b}}]&\mbox{for $k\in M_{p_k}$}
\end{cases}.
\end{align*}

For $k\in [1:N]\setminus D$, generate the codebooks following the steps below.
\begin{itemize}
\item Consider a random mapping $\gamma_k$ from $\mathcal{W}_k$ to $[1:2^{nr_{k,a}}]\times [1:2^{nr_{k,b}}]$ such that each $w_k\in \mathcal{W}_k$ is mapped to $\gamma_k(w_k)=(\alpha_k(w_k), \beta_k(w_k))$, where $\alpha_k(w_k)$ and $\beta_k(w_k)$ are uniformly and independently chosen from $[1:2^{nr_{k,a}}]$ and $[1:2^{nr_{k,b}}]$, respectively.

\item Generate $2^{nr_{k,a}}$ independent codewords $u_k^n(\alpha_k)$ for $\alpha_k\in [1:2^{nr_{k,a}}]$, of length $n$, according to $\prod_{i=1}^np(u_{k,i})$.

\item For each $\alpha_k\in [1:2^{nr_{k,a}}]$, generate $2^{nr_{k,b}}$ conditionally independent codewords $x_k^n(\beta_k|\alpha_k)$  for $\beta_k \in [1:2^{nr_{k,b}}]$, of length $n$, according to $\prod_{i=1}^np(x_{k,i}|u_{k,i}(\alpha_k))$.

\item For each $\alpha_k\in [1:2^{nr_{k,a}}]$, generate $2^{nr_{n_k,v}}$ conditionally independent codewords $\hat{y}_{n_k}^n(v_{n_k}|\alpha_k)$ for $v_{n_k}\in [1:2^{nr_{n_k,v}}]$, of length $n$, according to $\prod_{i=1}^np(\hat{y}_{n_k,i}|u_{k,i}(\alpha_k))$.

\item Let $x_k^n(w_k)$ denote $x_k^n(\beta_k|\alpha_k)$, where $(\alpha_k, \beta_k)=\gamma_k(w_k)$ for $w_k\in \mathcal{W}_k$.
\end{itemize}

The codebooks are revealed to all parties.
\subsubsection{Encoding at the source} For a message $w_1\in \mathcal{W}_1$, the source sends $x_1^n(w_1)$.

\subsubsection{Processing at node $k\in [2:N]$ such that $k=n_{p_k}$} Node $k$ operates following the steps below.
\begin{itemize}
\item Find a unique $\tilde{\alpha}_{p_k}$ such that
\begin{align*}
(u_{p_k}^n(\tilde{\alpha}_{p_k}), y_{k}^n)\in T_{\epsilon'}.
\end{align*}
If there is no such $\tilde{\alpha}_{p_k}$, randomly pick $\tilde{\alpha}_{p_k}\in[1:2^{nr_{p_k,a}}]$.

\item Seek for a $\tilde{v}_{k}$ such that
\begin{align*}
(u_{p_k}^n(\tilde{\alpha}_{p_k}),y_{k}^n,\hat{y}_{k}^n(\tilde{v}_{k}|\tilde{\alpha}_{p_k})) \in T_{\epsilon'}.
\end{align*}
If there are more than one such indices, randomly choose one among them. If there is no such $\tilde{v}_{k}$, randomly pick $\tilde{v}_{k}\in [1:2^{nr_{k,v}}]$.

\item Let $\tilde{w}_k=(\tilde{\alpha}_{p_k}, \tilde{v}_{k})$.

\item If $Z_k\neq \emptyset$, node $k$ sends $x_k^n(\tilde{w}_k)$.
\end{itemize}

\subsubsection{Processing at node $k\in [2:N]$ such that $k\in M_{p_k}$} Node $k$ operates following the steps below.
\begin{itemize}
\item Find a unique $(\tilde{\alpha}_{p_k}, \tilde{\beta}_{p_k})$ such that
\begin{align*}
x_{p_k}^n(\tilde{\beta}_{p_k}|\tilde{\alpha}_{p_k})=y_{k}^n.
\end{align*}
If there is no such $(\tilde{\alpha}_{p_k}, \tilde{\beta}_{p_k})$, randomly pick $(\tilde{\alpha}_{p_k}, \tilde{\beta}_{p_k})\in [1:2^{nr_{{p_k},a}}]\times [1:2^{nr_{{p_k},b}}]$.

\item Let $\tilde{w}_k=(\tilde{\alpha}_{p_k}, \tilde{\beta}_{p_k})$.

\item If $Z_k\neq \emptyset$, node $k$ sends $x_k^n(\tilde{w_k})$.
\end{itemize}

\subsubsection{Decoding at the destinations} 

The $d$-th destination for $d\in [1:K]$ decodes the message following the steps below.
\begin{itemize}
\item Construct a subset $F_{k,d}$ of $\mathcal{W}_k$ for every $k\in [1:N]$ in the following way. For $k\in D_d$, let $F_{k,d}\triangleq \{\tilde{w}_k\}$. For $k\notin G_d$, let $F_{k,d}\triangleq \mathcal{W}_k$. For all the other $k$'s, i.e., $k\in G_d\setminus D_d$, $F_{k,d}$'s are constructed recursively as
    \begin{align*}
    &F_{k,d}=\{w_k \big|(u_k^n(\alpha_k(w_k)),x_k^n(\beta_k(w_k)|\alpha_k(w_k)),\hat{y}_{n_k}^n({v_{n_k}}|\alpha_k(w_k)))\in T_{\epsilon''}, \cr
&(\alpha_k(w_k),v_{n_k})\in F_{n_k,d}, (\alpha_k(w_k),\beta_k(w_k))\in F_{j,d} \mbox{ for all $j\in M_k$ for some $v_{n_k}\in [1:2^{nr_{n_k,v}}]$}\}.
    \end{align*}

\item Find a unique $\hat{w}_{1,d}\in F_{1,d}$. If there is no such $\hat{w}_{1,d}$, randomly pick $\hat{w}_{1,d}\in \mathcal{W}_1$. The destination declares that $\hat{w}_{1,d}$ was sent.
\end{itemize}

\subsubsection{Analysis of the probability of error}
We analyze the probability of error for message $W_1$ averaged over the codebook ensemble. Let $\tilde{W}_k$ denote the chosen index at node $k$ for $k\in [2:N]$ and let $\tilde{V}_{n_k}$ denote the chosen covering index at node $n_k$ for $k\in [1:N]\setminus D$. Let us first introduce the notion of a supporting rate.
\begin{definition}
For our coding scheme, $T_k$ for $k\in [1:N]$ is said to \emph{support} a rate $r_k$ or have a supporting rate $r_k$ for destination $d\in [1:K]$ if, for any $\epsilon>0$,
\begin{align*}
\mu_{k,d}^{(n)}&\triangleq\P(\tilde{W}_k\notin F_{k,d})<\epsilon \cr
\nu_{k,d}^{(n)}&\triangleq \P(\tilde{w}_k'\in F_{k,d})<2^{-n(r_k-\epsilon)}
\end{align*}
for $\tilde{w}_k'\neq \tilde{W}_k$ for sufficiently small $\epsilon'$ and $\epsilon''$ and sufficiently large $n$.\footnote{$\P(\tilde{w}_k'\in F_{k,d})$ for all $\tilde{w}_k'\neq \tilde{W}_k$ are the same due to the symmetry of the codebook generation. } Note that the supremum of the supporting rate of $T_k$ for destination $d\in[1:K]$ becomes infinity and zero when $k\in D_d$ and $k\notin G_d$, respectively.
\end{definition}


The following lemma shows that $R<r_1$ is achievable if $T=T_1$ supports a rate  $r_1$ for all destinations.
\begin{lemma} \label{lemma:support_achievable}
If $T=T_1$ supports a rate  $r_1$ for all destinations, $R<r_1$ is achievable.
\end{lemma}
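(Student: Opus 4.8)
The plan is to connect the abstract notion of a \emph{supporting rate} for the whole tree $T=T_1$ to the concrete error probability of the decoder at each destination, and then invoke the union bound over destinations. First I would fix a destination $d\in[1:K]$ and recall that the decoder declares an error (for that destination) either when the transmitted index $\tilde{W}_1$ fails to land in its own recovered set $F_{1,d}$, or when some \emph{wrong} index $\tilde{w}_1'\neq \tilde{W}_1$ also lies in $F_{1,d}$, forcing a non-unique decision. By the very definition of $T_1$ supporting a rate $r_1$ for destination $d$, the first event has probability $\mu_{1,d}^{(n)}<\epsilon$, and each wrong index survives into $F_{1,d}$ with probability at most $\nu_{1,d}^{(n)}<2^{-n(r_1-\epsilon)}$. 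Since $|\mathcal{W}_1|=2^{nR}$, there are fewer than $2^{nR}$ competing indices, so a union bound over them gives a contribution of at most $2^{nR}\cdot 2^{-n(r_1-\epsilon)}=2^{-n(r_1-R-\epsilon)}$ to the error probability.

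Next I would assemble these pieces. For the fixed destination $d$, the conditional error probability is bounded by
\begin{align*}
\P(\hat{W}_{1,d}\neq W_1)\leq \mu_{1,d}^{(n)}+2^{nR}\nu_{1,d}^{(n)}<\epsilon+2^{-n(r_1-R-\epsilon)}.
\end{align*}
Because $T_1$ is assumed to support $r_1$ for \emph{all} destinations, this bound holds simultaneously for every $d\in[1:K]$, with the same $r_1$. I would then apply the union bound over the $K$ destinations to the overall error event $P_e^{(n)}=\P\{\hat{W}_{1,d}\neq W_1\text{ for some }d\}$, yielding
\begin{align*}
P_e^{(n)}\leq \sum_{d\in[1:K]}\P(\hat{W}_{1,d}\neq W_1)<K\epsilon+K\cdot 2^{-n(r_1-R-\epsilon)}.
\end{align*}
Whenever $R<r_1$, choosing $\epsilon$ small enough that $r_1-R-\epsilon>0$ makes the second term vanish as $n\to\infty$, while $K$ is a fixed constant independent of $n$; hence $P_e^{(n)}$ can be driven below any prescribed threshold, establishing that $R$ is achievable.

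I expect the main obstacle to be a subtle symmetry/averaging point rather than the union-bound arithmetic. The supporting-rate definition bounds the error probability \emph{averaged over the random codebook ensemble} (including the random binning maps $\gamma_k$), so the inequalities above should be read as statements about the ensemble-averaged $P_e^{(n)}$; I would be careful to note that the symmetry of the codebook construction makes $\P(\tilde{w}_1'\in F_{1,d})$ identical for all wrong indices, which is exactly what justifies replacing a sum over wrong messages by $2^{nR}\nu_{1,d}^{(n)}$ (this is the content of the footnote in the definition). A secondary point worth stating explicitly is that the two quantifier orders match up: the definition requires the supporting-rate bounds to hold for sufficiently small $\epsilon',\epsilon''$ and sufficiently large $n$, and since there are only finitely many destinations I can choose a single $n$ large enough to work for all of them simultaneously. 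Once the ensemble-average error vanishes, a standard argument extracts a deterministic sequence of $(2^{nR},n)$ codes achieving $P_e^{(n)}\to 0$, completing the reduction of Lemma~\ref{lemma:support_achievable} to the supporting-rate property of $T_1$.
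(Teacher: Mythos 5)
Your proposal is correct and follows essentially the same route as the paper's proof: the decomposition of each destination's error into the event $\tilde{W}_1\notin F_{1,d}$ (bounded by $\mu_{1,d}^{(n)}<\epsilon$) plus a union bound over the fewer than $2^{nR}$ wrong indices (each bounded by $\nu_{1,d}^{(n)}<2^{-n(r_1-\epsilon)}$), followed by a union bound over the $K$ destinations, yielding $P_e^{(n)}<K\bigl(\epsilon+2^{-n(r_1-\epsilon-R)}\bigr)$, which vanishes once $R<r_1-\epsilon$. Your added remarks on codebook symmetry, the uniformity of the quantifiers over the finitely many destinations, and the final de-randomization are sound clarifications of points the paper handles implicitly (the symmetry being exactly the content of its footnote), not deviations from its argument.
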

\begin{proof}
Fix $\epsilon>0$. If $T$ supports a rate  $r_1$ for all destinations, the average probability of error using our coding scheme is upper-bounded as
\begin{align}
P_e^{(n)}&=\P\left\{\hat{W}_{1,d}\neq W_1 \mbox{ for some } d\in [1:K] \right\}\cr
&\leq \sum_{d\in[1:K]}\P\left\{\hat{W}_{1,d}\neq W_1 \right\}\cr
&<\sum_{d\in [1:K]}\left(\mu_{1,d}^{(n)}+2^{nR}\nu_{1,d}^{(n)}\right)\cr
&<K\left(\epsilon+2^{-n(r_1-\epsilon-R)}\right) \label{eqn:support}
\end{align}
for sufficiently large $n$.
Note that (\ref{eqn:support}) is upper-bounded by $(K+1)\epsilon$ for sufficiently large $n$ if $R<r_1-\epsilon$. Thus, $R<r_1$ is achievable.
\end{proof}

Now, let us derive a sufficient condition for a supporting rate $r_1$ of $T$ for all destinations using the following lemma. The proof is at the end of this section.
\begin{lemma}\label{lemma:support}
Consider $d\in[1:K]$ and $k\in G_d \setminus D_d$. If $T_j$ for $j\in Z_k$ supports a rate $r_j$ for destination $d$, $T_k$ supports a rate $r_k$ for destination $d$ such that
\begin{subequations}\label{eqn:lemma_ach}
\begin{align}
r_k&\leq I(U_k;Y_{n_k})+H(X_k|U_k)\label{eqn:lemma_ach1}\\
r_k&\leq \sum_{j\in M_k\cap G_d}r_j+I(U_k;Y_{n_k})+I(X_k;\hat{Y}_{n_k}|U_k)\label{eqn:lemma_ach2}\\
r_k&\leq \sum_{j\in Z_k\cap G_d}r_j-I(Y_{n_k};\hat{Y}_{n_k}|U_k, X_k).\label{eqn:lemma_ach3}
\end{align}
\end{subequations}
\end{lemma}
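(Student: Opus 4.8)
The plan is to verify the two inequalities in the definition of a supporting rate: the \emph{covering} inequality $\mu_{k,d}^{(n)}<\epsilon$ and the \emph{packing} inequality $\nu_{k,d}^{(n)}<2^{-n(r_k-\epsilon)}$. Throughout I set the free rates of node $k$ to their natural operating points $r_{k,a}=I(U_k;Y_{n_k})$, $r_{k,b}=H(X_k|U_k)$, and $r_{n_k,v}=I(Y_{n_k};\hat{Y}_{n_k}|U_k)$; this is feasible for the exact recovery of $(\alpha_k,\beta_k)$ at the noiseless children since $r_{k,a}+r_{k,b}=I(U_k;Y_{n_k})+H(X_k|U_k)\leq I(U_k;X_k)+H(X_k|U_k)=H(X_k)$, where the inequality is the data-processing inequality for the Markov chain $U_k-X_k-Y_{n_k}$ induced by (\ref{eqn:joint_distr_general}).

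For the covering inequality I exhibit the covering index $\tilde{V}_{n_k}$ chosen at node $n_k$ as the certificate that places $\tilde{W}_k$ in $F_{k,d}$. Conditioned on node $n_k$ decoding $\tilde{\alpha}_k$ correctly (which holds with high probability as $r_{k,a}\leq I(U_k;Y_{n_k})$) and on the covering step succeeding (which holds with high probability as $r_{n_k,v}\geq I(Y_{n_k};\hat{Y}_{n_k}|U_k)$), we have $(u_k^n(\tilde{\alpha}_k),y_{n_k}^n,\hat{y}_{n_k}^n(\tilde{V}_{n_k}|\tilde{\alpha}_k))\in T_{\epsilon'}$. Because $X_k-(U_k,Y_{n_k})-\hat{Y}_{n_k}$ is a Markov chain under (\ref{eqn:joint_distr_general}), the Markov lemma upgrades this to $(u_k^n(\tilde{\alpha}_k),x_k^n(\tilde{\beta}_k|\tilde{\alpha}_k),\hat{y}_{n_k}^n(\tilde{V}_{n_k}|\tilde{\alpha}_k))\in T_{\epsilon''}$, which is the typicality requirement in the definition of $F_{k,d}$. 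The remaining two requirements, $(\tilde{\alpha}_k,\tilde{V}_{n_k})=\tilde{W}_{n_k}\in F_{n_k,d}$ and $(\tilde{\alpha}_k,\tilde{\beta}_k)=\tilde{W}_j\in F_{j,d}$ for $j\in M_k$, fail with probability at most $\mu_{n_k,d}^{(n)}+\sum_{j\in M_k}\mu_{j,d}^{(n)}$ by the hypothesis that the children support their rates. A union bound over these finitely many events gives $\mu_{k,d}^{(n)}<\epsilon$.

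For the packing inequality, fix $\tilde{w}_k'\neq\tilde{W}_k$ with bins $(\alpha_k',\beta_k')=\gamma_k(\tilde{w}_k')$, which are uniform and independent of the true bins $(\tilde{\alpha}_k,\tilde{\beta}_k)$, and split $\P(\tilde{w}_k'\in F_{k,d})$ according to the realized bins into three regimes, each yielding one inequality of (\ref{eqn:lemma_ach1})--(\ref{eqn:lemma_ach3}). If $(\alpha_k',\beta_k')=(\tilde{\alpha}_k,\tilde{\beta}_k)$, an event of probability $2^{-n(r_{k,a}+r_{k,b})}$, I bound the conditional membership probability by $1$, giving the exponent $r_{k,a}+r_{k,b}=I(U_k;Y_{n_k})+H(X_k|U_k)$, i.e.\ (\ref{eqn:lemma_ach1}). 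If $\alpha_k'=\tilde{\alpha}_k$ but $\beta_k'\neq\tilde{\beta}_k$, then with the certificate $v_{n_k}=\tilde{V}_{n_k}$ the fresh codeword $x_k^n(\beta_k'|\tilde{\alpha}_k)$ is jointly typical with the true $u_k^n(\tilde{\alpha}_k)$ and $\hat{y}_{n_k}^n(\tilde{V}_{n_k}|\tilde{\alpha}_k)$ only with probability $2^{-nI(X_k;\hat{Y}_{n_k}|U_k)}$, while $(\tilde{\alpha}_k,\beta_k')\neq\tilde{W}_j$ contributes $\prod_{j\in M_k\cap G_d}2^{-n(r_j-\epsilon)}$; collecting the $2^{nr_{k,b}}$ such bins gives the exponent $I(U_k;Y_{n_k})+I(X_k;\hat{Y}_{n_k}|U_k)+\sum_{j\in M_k\cap G_d}r_j$, i.e.\ (\ref{eqn:lemma_ach2}). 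If $\alpha_k'\neq\tilde{\alpha}_k$, every codeword on the certificate is fresh; summing the typicality factor $2^{-nI(X_k;\hat{Y}_{n_k}|U_k)}$ and the children's factors $2^{-n\sum_{j\in Z_k\cap G_d}(r_j-\epsilon)}$ over the $2^{n(r_{k,a}+r_{k,b}+r_{n_k,v})}$ triples $(\alpha_k',\beta_k',v_{n_k})$ yields the exponent $\sum_{j\in Z_k\cap G_d}r_j+I(X_k;\hat{Y}_{n_k}|U_k)-r_{n_k,v}$. Substituting $r_{n_k,v}=I(Y_{n_k};\hat{Y}_{n_k}|U_k)$ and invoking the identity $I(X_k;\hat{Y}_{n_k}|U_k)-I(Y_{n_k};\hat{Y}_{n_k}|U_k)=-I(Y_{n_k};\hat{Y}_{n_k}|U_k,X_k)$, which is a consequence of the same Markov chain $X_k-(U_k,Y_{n_k})-\hat{Y}_{n_k}$, reduces this to (\ref{eqn:lemma_ach3}). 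The mixed terms with $\alpha_k'=\tilde{\alpha}_k$, $\beta_k'\neq\tilde{\beta}_k$, and $v_{n_k}\neq\tilde{V}_{n_k}$ carry an extra factor $2^{-nI(U_k;Y_{n_k})}\leq 1$ and are therefore dominated by (\ref{eqn:lemma_ach3}); summing the three regime contributions gives $\nu_{k,d}^{(n)}<2^{-n(r_k-\epsilon)}$ for every $r_k$ obeying (\ref{eqn:lemma_ach1})--(\ref{eqn:lemma_ach3}).

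I expect the main obstacle to be the joint bookkeeping of the third regime together with the correct accounting of statistical independence. The compression codebook $\hat{y}_{n_k}^n(\cdot|\alpha_k)$ is generated from node $k$'s codeword $u_k^n(\alpha_k)$, whereas the membership event $(\alpha_k',v_{n_k})\in F_{n_k,d}$ is governed by the independently generated codebook of the subtree $T_{n_k}$ and the membership events $(\alpha_k',\beta_k')\in F_{j,d}$ by the independent codebooks of the subtrees $T_j$; one must check that conditioning on the bins decouples the typicality test from all of these membership events so that their probabilities factor. A second delicate point, already used above, is that in the fully colliding regime the membership probability must be bounded by $1$ rather than by a union bound over the certificate $v_{n_k}$: the latter would overcount the single indistinguishable codeword $2^{nr_{n_k,v}}$ times and replace the correct exponent (\ref{eqn:lemma_ach1}) by a strictly smaller one.
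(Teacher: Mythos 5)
Your proof is correct and follows the paper's argument essentially step for step: the same covering-side analysis (covering lemma at rate $r_{n_k,v}$, packing lemma for the cloud center at rate $r_{k,a}$, the conditional-typicality/Markov-lemma upgrade from $(u^n,y_{n_k}^n,\hat y_{n_k}^n)$ to $(u^n,x_k^n,\hat y_{n_k}^n)$ via the chain $X_k\leftrightarrow(U_k,Y_{n_k})\leftrightarrow\hat Y_{n_k}$, and induction on the children through $\mu_{j,d}^{(n)}$), and the same three-regime bin decomposition of $\nu_{k,d}^{(n)}$, which is exactly the paper's $E_4$, $E_5$, $E_6$ split --- including the two points you flag as delicate, namely bounding the full-collision event by its bin probability $2^{-n(r_{k,a}+r_{k,b})}$ alone rather than union-bounding over $v_{n_k}$, and folding the mixed case $\alpha_k'=\tilde\alpha_k$, $\beta_k'\neq\tilde\beta_k$, $v_{n_k}\neq\tilde V_{n_k}$ into the exponent of (\ref{eqn:lemma_ach3}). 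The only cosmetic difference is that the paper operates at $r_{k,a}=I(U_k;Y_{n_k})-2\delta(\epsilon')$, $r_{k,b}=H(X_k|U_k)-2\delta(\epsilon')$, $r_{n_k,v}=I(Y_{n_k};\hat Y_{n_k}|U_k)+2\delta(\epsilon')$, i.e.\ with explicit $\delta$-slack so that the packing and covering lemmas strictly apply; your operating points at exact equality need this slack, which the $\epsilon$ in the definition of a supporting rate readily absorbs.
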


To get a bound on the supporting rate $r_1$ of $T$ for destination $d\in [1:K]$ using Lemma~\ref{lemma:support}, we apply the Fourier-Motzkin elimination to the set of inequalities~(\ref{eqn:lemma_ach}) for all $k\in G_d\setminus D_d$  by removing all the other $r_k$'s, i.e., $k\in G_d\setminus D_d \setminus \{1\}$.\footnote{Note that $r_k$ for $k\in D_d$ is given by infinity. } The resultant inequalities of $r_1$ can be written as the min-cut form
\begin{align*}
r_1\leq  \min_{S_d} \sum_{k\in A_{S_d,d}}I(U_{k};Y_{n_k})+H(X_{k}|U_{k})+\sum_{k\in B_{S_d,d}}I(U_{k};Y_{n_k})+I(X_{k};\hat{Y}_{n_k}|U_{k})\\
-\sum_{k\in C_{S_d,d}}I(Y_{n_k};\hat{Y}_{n_k}|U_{k}, X_{k})
\end{align*}
where the minimization is over all cuts $S_d$ considered in Theorem~\ref{thm:tree_multicast}. Here, each cut $S_d$ corresponds to the set of inequalities that results in an inequality of $r_1$ in the Fourier-Motzkin elimination, i.e., the set of inequalities consists of (\ref{eqn:lemma_ach1}) for $k\in A_{S_d,d}$, (\ref{eqn:lemma_ach2}) for $k\in B_{S_d,d}$, and (\ref{eqn:lemma_ach3}) for $k\in C_{S_d,d}$.

For all destinations, we obtain the following sufficient condition for a supporting rate $r_1$.
\begin{align}
r_1\leq  \min_{d\in [1:K]}\min_{S_d} \sum_{k\in A_{S_d,d}}I(U_{k};Y_{n_k})+H(X_{k}|U_{k})+\sum_{k\in B_{S_d,d}}I(U_{k};Y_{n_k})+I(X_{k};\hat{Y}_{n_k}|U_{k})\cr
-\sum_{k\in C_{S_d,d}}I(Y_{n_k};\hat{Y}_{n_k}|U_{k}, X_{k}) \label{eqn:achievable_d}
\end{align}

From Lemma \ref{lemma:support_achievable}, all rates less than the right-hand side of~(\ref{eqn:achievable_d}) are achievable. By considering all joint distributions of~(\ref{eqn:joint_distr_general}), the lower bound in Theorem~\ref{thm:tree_multicast} is proved.

\subsubsection*{Proof of Lemma~\ref{lemma:support}} 
Fix $d\in[1:K]$ and $k\in G_d\setminus D_d$. Fix any $\epsilon>0$. Without loss of generality, assume that $\tilde{W}_k=(1,1)$ and $\gamma_k(1,1)=(1,1)$.
First, $\mu_{k,d}^{(n)}$ is upper-bounded as
\begin{align}
\mu_{k,d}^{(n)}&\leq \P\left(E_1 \cup E_2 \cup\bigcup_{j\in M_k} E_{3j}\right)\cr
&\leq \P\left(\tilde{E}_1\cup \tilde{E}_2 \cup \tilde{E}_3 \cup \tilde{E}_4\cup E_1\cup E_2 \cup \bigcup_{j\in M_k} E_{3j}\right)\cr
&\leq  \P(\tilde{E}_1)+\P(\tilde{E}_2)+\P(\tilde{E}_3)+\P(\tilde{E}_4)\cr
&~~~~~~~~~~~~~~~~~~~~~~~~~~+\P(E_1\cap\tilde{E}_1^c)+\P(E_2|\tilde{E}_2^c\cap \tilde{E}_3^c)+\sum_{j\in M_k} \P(E_{3j}|\tilde{E}_4^c) \label{eqn:error1_events}
\end{align}
where the events are defined as
\begin{align*}
E_1&=\{(U_k^n(1),X_k^n(1|1),\hat{Y}_{n_k}^n(\tilde{V}_{n_k}|1))\notin T_{\epsilon''}\}\cr
E_2&=\left\{(1,\tilde{V}_{n_k})\notin F_{n_k,d}\right\}\cr
E_{3j}&=\left\{(1,1)\notin F_{j,d}\right\} \mbox{ for } j\in M_k\cr
\tilde{E}_1&=\{(U_k^n(1),Y_{n_k}^n, \hat{Y}_{n_k}^n({v_{n_k}}|1)) \notin T_{\epsilon'} \mbox{ for all } {v_{n_k}}\in [1:2^{nr_{n_k,v}}] \}\cr
\tilde{E}_2&=\left\{(U_k^n(1),Y_{n_k}^n)\notin T_{\epsilon'} \right\}\cr
\tilde{E}_3&=\left\{(U_k^n(\alpha_k),Y_{n_k}^n)\in T_{\epsilon'} \mbox{ for some } \alpha_k\neq 1\right\}\cr
\tilde{E}_4&=\left\{X_k^n(\beta_k|\alpha_k)=X_k^n(1|1) \mbox{ for some } (\alpha_k,\beta_k)\neq (1,1)\right\}.
\end{align*}
Note that $\tilde{E}_1^c$ implies that $(U_k^n(1),Y_{n_k}^n, \hat{Y}_{n_k}^n({\tilde{V}_{n_k}}|1)) \in T_{\epsilon'}$, $\tilde{E}_2^c\cap \tilde{E}_3^c$ implies that $\tilde{W}_{n_k}=(1,\tilde{V}_{n_k})$, and $\tilde{E}_4^c$ implies that $\tilde{W}_{j}=(1,1)$ for all $j\in M_k$. Let us upper bound each term in the right-hand side of (\ref{eqn:error1_events}).
\begin{itemize}
\item If $r_{n_k,v}>I(Y_{n_k};\hat{Y}_{n_k}|U_k)+\delta(\epsilon')$,\footnote{Here and from now on, $\delta(\epsilon')\rightarrow 0$ as $\epsilon'\rightarrow 0$.} we have $\P(\tilde{E}_1)<\epsilon$ for sufficiently large $n$ from the covering lemma~\cite{GamalKim:09}.
\item By the law of large numbers,  we have $\P(\tilde{E}_2)<\epsilon$ for sufficiently large $n$.
\item If $r_{k,a}<I(U_k;Y_{n_k})-\delta(\epsilon')$, we have $\P(\tilde{E}_3)<\epsilon$ for sufficiently large $n$ from the packing lemma~\cite{GamalKim:09}.
\item If $r_{k,a}+r_{k,b}<H(X_k)-\delta(\epsilon')$ and $r_{k,b}<H(X_k|U_k)-\delta(\epsilon')$, we have $\P(\tilde{E}_4)<\epsilon$ for sufficiently large $n$.
\item We have
\begin{align*}
&\P(E_1\cap\tilde{E}_1^c)\\
&=\P\{(U_k^n(1),X_k^n(1|1),\hat{Y}_{n_k}^n(\tilde{V}_{n_k}|1))\notin T_{\epsilon''}, (U_k^n(1),Y_{n_k}^n, \hat{Y}_{n_k}^n(\tilde{V}_{n_k}|1)) \in T_{\epsilon'}\}\\
&\leq \sum_{(u_k^n,y_{n_k}^n,\hat{y}_{n_k}^n)\in T_{\epsilon'}}p(u_k^n,y_{n_k}^n,\hat{y}_{n_k}^n)P\{(u_k^n(1),X_k^n(1|1),\hat{y}_{n_k}^n(\tilde{V}_{n_k}|1))\notin T_{\epsilon''}|u_k^n,y_{n_k}^n,\hat{y}_{n_k}^n\}\\
&\overset{(a)}{\leq}\epsilon
\end{align*}
for sufficiently large $n$, where $(a)$ is from the conditional typicality lemma~\cite{GamalKim:09}.
\item We have $\P(E_2|\tilde{E}_2^c\cap \tilde{E}_3^c)=\mu_{n_k,d}^{(n)}<\epsilon$
for sufficiently large $n$.
\item We have $\sum_{j\in M_k} \P(E_{3j}|\tilde{E}_4^c)=\sum_{j\in M_k}\mu_{j,d}^{(n)}<\epsilon$ for sufficiently large $n$.
\end{itemize}
Let us choose $r_{k,a}$, $r_{k,b}$ and $r_{n_k,v}$ as
\begin{align*}
r_{k,a}&=I(U_k;Y_{n_k})-2\delta(\epsilon')\\
r_{k,b}&=H(X_k|U_k)-2\delta(\epsilon')\\
r_{n_k,v}&=I(Y_{n_k};\hat{Y}_{n_k}|U_k)+2\delta(\epsilon').
\end{align*}
For the above choice of $r_{k,a}, r_{k,b}$, and $r_{n_k,v}$, we have $\mu_{k,d}^{(n)}<7\epsilon$ for sufficiently large $n$.

Now, consider $\tilde{w}_k'\neq (1,1)$. $\nu_{k,d}^{(n)}$ is upper-bounded as
\begin{align}
\nu_{k,d}^{(n)}&=\P(\tilde{w}_k'\in F_{k,d})\cr
&\leq \P(E_4 \cup E_5 \cup E_6)\cr
&\leq \P(\tilde{E}_2 \cup \tilde{E}_3 \cup \tilde{E}_4\cup E_4 \cup E_5 \cup E_6 )\cr
&\leq \P(\tilde{E}_2)+\P(\tilde{E}_3)+\P(\tilde{E}_4)+\P(E_4)+\P(E_5\cap \tilde{E}_2^c\cap \tilde{E}_3^c\cap \tilde{E}_4^c)+\P(E_6\cap\tilde{E}_2^c\cap \tilde{E}_3^c\cap \tilde{E}_4^c) \cr
&< 3\epsilon +\P(E_4)+\P(E_5\cap \tilde{E}_2^c\cap \tilde{E}_3^c\cap \tilde{E}_4^c)+\P(E_6\cap\tilde{E}_2^c\cap \tilde{E}_3^c\cap \tilde{E}_4^c)\label{eqn:error2_events}
\end{align}
for sufficiently large $n$, where the events are given as
\begin{align*}
E_4&=\left\{\gamma_k(\tilde{w}_k')=(1,1)\right\}\cr
E_5&=\{\gamma_k(\tilde{w}_k')=(1, \beta_k),(U_k^n(1),X_k^n(\beta_k|1),\hat{Y}_{n_k}^n(\tilde{V}_{n_k}|1))\in T_{\epsilon''}, (1,\tilde{V}_{n_k})\in F_{n_k,d}, \cr
&~~~~~~~~~~~~~~~~~~~~~~~~~~~~~~~~~~~\left.(1,\beta_k)\in F_{j,d} \mbox{ for all } j\in M_k \mbox{ for some }\beta_k\neq 1\right\}\cr
E_6&= \{\gamma_k(\tilde{w}_k')=(\alpha_k, \beta_k),(U_k^n(\alpha_k),X_k^n(\beta_k|\alpha_k),\hat{Y}_{n_k}^n({v_{n_k}}|\alpha_k))\in T_{\epsilon''}, (\alpha_k,{v_{n_k}})\in F_{n_k,d},\cr
&~~~~~~(\alpha_k,\beta_k)\in F_{j,d} \mbox{ for all } j\in M_k \mbox{ for some }(\alpha_k, \beta_k)\neq (1,1)\mbox{ and }(\alpha_k, {v_{n_k}})\neq (1,\tilde{V}_{n_k})\}.
\end{align*}
Let us upper bound each term in the right-hand side of (\ref{eqn:error2_events}).
\begin{itemize}
\item $\P(E_4)$ is given as
\begin{align*}
\P(E_4)=2^{-nr_{k,a}}2^{-nr_{k,b}}=2^{-n(I(U_k;Y_{n_k})+H(X_k|U_k)-4\delta(\epsilon'))}.
\end{align*}
\item We have
\begin{align*}
&\P(E_5\cap \tilde{E}_2^c\cap \tilde{E}_3^c\cap \tilde{E}_4^c)\cr
&\leq\sum_{\beta_k\neq 1}\P(\gamma_k(\tilde{w}_k')=(1, \beta_k))\P((U_k^n(1),X_k^n(\beta_k|1),\hat{Y}_{n_k}^n(\tilde{V}_{n_k}|1))\in T_{\epsilon''})\prod_{j\in M_k}\nu_{j,d}^{(n)}\\
&\overset{(a)}{<} 2^{nr_{k,b}}2^{-n(r_{k,a}+r_{k,b})}2^{-n(I(X_k;\hat{Y}_{n_k}|U_k)-\delta(\epsilon''))}2^{-n(\sum_{j\in M_k}r_j-\epsilon)}\\
&= 2^{-n(\sum_{j\in M_k}r_j+I(U_k;Y_{n_k})+I(X_k;\hat{Y}_{n_k}|U_k)-2\delta(\epsilon')-\delta(\epsilon'')-\epsilon)}\nonumber
\end{align*}
for sufficiently large $n$, where $(a)$ is because
\begin{align*}
\P((U_k^n(1),X_k^n(\beta_k|1),\hat{Y}_{n_k}^n(\tilde{V}_{n_k}|1))\in T_{\epsilon''})<2^{-n(I(X_k;\hat{Y}_{n_k}|U_k)-\delta(\epsilon''))}
\end{align*}
for $\beta_k\neq 1$ from the joint typicality lemma~\cite{GamalKim:09}.
\item We get
\begin{align*}
&\P(E_6\cap\tilde{E}_2^c\cap \tilde{E}_3^c\cap \tilde{E}_4^c)\cr
&\leq\sum_{\substack{\alpha_k,\beta_k,{v_{n_k}}\\ (\alpha_k,\beta_k)\neq (1,1)\\ (\alpha_k,{v_{n_k}})\neq (1,\tilde{V}_{n_k})}} \P(\gamma_k(\tilde{w}_k')=(\alpha_k, \beta_k))\P((U_k^n(\alpha_k),X_k^n(\beta_k|\alpha_k),\hat{Y}_{n_k}^n({v_{n_k}}|\alpha_k))\in T_{\epsilon''})
\prod_{j\in Z_k}\nu_{j,d}^{(n)}\cr
&\overset{(a)}{<}2^{n(r_{k,a}+r_{k,b}+r_{n_k,v})}2^{-n(r_{k,a}+r_{k,b})}2^{-n(I(X_k;\hat{Y}_{n_k}|U_k)-\delta(\epsilon''))}2^{-n(\sum_{j\in Z_k}r_j-\epsilon)}\\
&= 2^{-n(\sum_{j\in Z_k}r_j-I(Y_{n_k};\hat{Y}_{n_k}|U_k,X_k)-2\delta(\epsilon')-\delta(\epsilon'')-\epsilon)}\nonumber
\end{align*}
for sufficiently large $n$, where $(a)$ is from the joint typicality lemma~\cite{GamalKim:09}.
\end{itemize}
Note that $r_j=0$ for $j\notin G_d$. Thus, we have
\begin{align*}
&\nu_{k,d}^{(n)}\cr
&<2^{-n(\min\{I(U_k;Y_{n_k})+H(X_k|U_k), \sum_{j\in M_k\cap G_d}r_j+I(U_k;Y_{n_k})+I(X_k;\hat{Y}_{n_k}|U_k), \sum_{j\in Z_k\cap G_d}r_j-I(Y_{n_k};\hat{Y}_{n_k}|U_k, X_k)
\}-2\epsilon)}
\end{align*}
for sufficiently small $\epsilon'$ and $\epsilon''$ and sufficiently large $n$.

\section{Upper Bound} \label{sec:ub}
Fix $d\in [1:K]$. Let $U_{k,i}\triangleq \left(X_{k,i+1}^n, Y_{n_k}^{i-1}\right)$ and $\hat{Y}_{n_k,i}\triangleq Y_{L_{n_k}\cap D_d}^n$ for $k\in [1:N]$ and $i\in [1:n]$. Note that
\begin{align*}
p\left(u_{k,i},x_{k,i},y_{n_k,i},\hat{y}_{n_k,i}\right)=p\left(u_{k,i},x_{k,i}\right)p\left(y_{n_k,i}|x_{k,i}\right)p\left(\hat{y}_{n_k,i}|u_{k,i},y_{n_k,i}\right)
\end{align*}
for $k\in [1:N]$ and $i\in [1:n]$. Consider a cut $S_d$ considered in Theorem \ref{thm:tree_multicast}.

Let us first present two lemmas and a corollary.
\begin{lemma} \label{lemma:basic}
For $k\in [1:N]$, the following inequalities and equality hold.
\begin{subequations}\label{eqn:cov_basic}
\begin{align}
\sum_{i=1}^n I(U_{k,i};Y_{n_k,i}) + H(X_{k,i}|U_{k,i}) - H(X_k^n)&\geq 0\label{eqn:cov_basic1}\\
\sum_{i=1}^n I(U_{k,i};Y_{n_k,i}) + I(X_{k,i};\hat{Y}_{n_k,i}|U_{k,i}) - I(X_k^n;Y_{L_{n_k}\cap D_d}^n)&\geq 0\label{eqn:cov_basic2}\\
-\sum_{i=1}^n I(Y_{n_k,i};\hat{Y}_{n_k,i}|U_{k,i}, X_{k,i}) + I(Y_{n_k}^n;Y_{L_{n_k}\cap D_d}^n|X_k^n)&=0\label{eqn:cov_basic3}
\end{align}
\end{subequations}
\end{lemma}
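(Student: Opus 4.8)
The plan is to single-letterize each of the three relations using the converse auxiliary $U_{k,i}=(X_{k,i+1}^n,Y_{n_k}^{i-1})$ already fixed in the setup, writing $V\triangleq Y_{L_{n_k}\cap D_d}^n$ for the block of leaf observations (so that $\hat Y_{n_k,i}=V$ for every $i$). Two structural facts drive everything. First, the $k\to n_k$ channel is memoryless, so $Y_{n_k,i}$ depends on all other variables only through $X_{k,i}$; in particular, given $X_{k,i}^n$ the output block $Y_{n_k,i}^n$ is conditionally independent of the strict past $X_k^{i-1}$. Second, $V$ is produced from $Y_{n_k}^n$ by relay operations and channels lying strictly inside the subtree $T_{n_k}$, whose randomness is independent of everything above node $k$; hence the block Markov chain $X_k^n\to Y_{n_k}^n\to V$ holds, consistent with the per-letter factorization noted before the lemma. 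The only analytic tools needed are the chain rule, the Csisz\'ar sum identity, and nonnegativity of (conditional) mutual information.

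For (\ref{eqn:cov_basic1}) I would expand $H(X_k^n)=\sum_i H(X_{k,i}|X_{k,i+1}^n)$, so that $\sum_i H(X_{k,i}|U_{k,i})-H(X_k^n)=-\sum_i I(X_{k,i};Y_{n_k}^{i-1}|X_{k,i+1}^n)$. Applying the Csisz\'ar sum identity to the pair $(X_k,Y_{n_k})$ turns this sum into $\sum_i I(X_{k,i+1}^n;Y_{n_k,i}|Y_{n_k}^{i-1})$, which is at most $\sum_i I(U_{k,i};Y_{n_k,i})$ because $U_{k,i}\supseteq X_{k,i+1}^n$; the claimed nonnegativity follows. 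Relation (\ref{eqn:cov_basic2}) is the same idea one layer up: expand $I(X_k^n;V)=\sum_i I(X_{k,i};V|X_{k,i+1}^n)$ and insert $Y_{n_k}^{i-1}$ into the conditioning to get
\begin{align*}
I(X_k^n;V)=\sum_i I(X_{k,i};V|U_{k,i})+\sum_i\big[I(X_{k,i};Y_{n_k}^{i-1}|X_{k,i+1}^n)-I(X_{k,i};Y_{n_k}^{i-1}|V,X_{k,i+1}^n)\big].
\end{align*}
Dropping the nonnegative subtracted terms and bounding the remaining $\sum_i I(X_{k,i};Y_{n_k}^{i-1}|X_{k,i+1}^n)$ by $\sum_i I(U_{k,i};Y_{n_k,i})$ via the identical Csisz\'ar-identity step then yields (\ref{eqn:cov_basic2}).

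For the equality (\ref{eqn:cov_basic3}) I would chain-rule the block term, $I(Y_{n_k}^n;V|X_k^n)=\sum_i I(Y_{n_k,i};V|X_k^n,Y_{n_k}^{i-1})$, and match it term-by-term to $\sum_i I(Y_{n_k,i};\hat Y_{n_k,i}|U_{k,i},X_{k,i})=\sum_i I(Y_{n_k,i};V|X_{k,i}^n,Y_{n_k}^{i-1})$. The two expressions differ only in whether the strict past $X_k^{i-1}$ appears in the conditioning, so the identity reduces to the per-letter Markov chain $X_k^{i-1}\to(X_{k,i}^n,Y_{n_k}^{i-1})\to(Y_{n_k,i},V)$. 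I expect this to be the main obstacle: because $V$ is a function of the \emph{entire} block $Y_{n_k}^n$, including the future symbols $Y_{n_k,i}^n$, one cannot merely invoke memorylessness of a single channel use. The argument is that, conditioned on $(X_{k,i}^n,Y_{n_k}^{i-1})$, the future outputs $Y_{n_k,i}^n$ are generated from $X_{k,i}^n$ independently of $X_k^{i-1}$ (channel memorylessness), and $V$ is in turn a function of $Y_{n_k}^n$ and subtree-internal randomness; combining these gives $(Y_{n_k,i},V)\perp X_k^{i-1}\mid(X_{k,i}^n,Y_{n_k}^{i-1})$, which makes the two conditional mutual informations equal and, after summing, establishes the equality.
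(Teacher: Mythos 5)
Your proposal is correct and follows essentially the same route as the paper: the identical Csisz\'ar-sum-identity single-letterization proves (\ref{eqn:cov_basic1}) and (\ref{eqn:cov_basic2}) (your direct mutual-information rearrangement for (\ref{eqn:cov_basic2}) drops exactly the same nonnegative terms as the paper's entropy-based combination of (\ref{eqn:cov_basic1}) with $H(X_k^n|Y^n_{L_{n_k}\cap D_d})\geq \sum_i H(X_{k,i}|U_{k,i},\hat{Y}_{n_k,i})$), and the same chain-rule matching handles (\ref{eqn:cov_basic3}). Your single joint Markov chain $X_k^{i-1}\leftrightarrow(X_{k,i}^n,Y_{n_k}^{i-1})\leftrightarrow(Y_{n_k,i},Y_{L_{n_k}\cap D_d}^n)$ implies both per-letter chains the paper invokes in step $(a)$, and your justification via channel memorylessness plus independence of the subtree-internal randomness is a valid (indeed slightly more explicit) account of why those chains hold in the tree topology.
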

\begin{lemma} \label{lemma:coverse_recursive}
The following inequalities hold.
\begin{subequations}\label{eqn:cov_rec}
\begin{align}
I(X_k^n;Y_{L_k\cap D_d}^n)-H(X_k^n)&\leq 0 && \mbox{for $k\in [1:N]$} \label{eqn:cov_rec_1}\\
I(X_k^n;Y_{L_k\cap D_d}^n)-I(X_k^n;Y_{L_{n_k}\cap D_d}^n)&\leq \sum_{j\in M_k\cap G_d} I(X_j^n;Y_{L_j\cap D_d}^n) && \mbox{for $k\in B_{S_d,d}$} \label{eqn:cov_rec_2}\\
I(X_k^n;Y_{L_k\cap D_d}^n)+I(Y_{n_k}^n;Y_{L_{n_k}\cap D_d}^n|X_k^n)&\leq \sum_{j\in Z_k\cap G_d} I(X_j^n;Y_{L_j\cap D_d}^n)  && \mbox{for $k\in C_{S_d,d}$}  \label{eqn:cov_rec_3}
\end{align}
\end{subequations}
\end{lemma}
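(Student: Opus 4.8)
The plan is to prove the three parts of Lemma~\ref{lemma:coverse_recursive} by combining two structural ``network Markov'' properties of the tree with elementary chain-rule manipulations. Throughout I would abbreviate $W\triangleq Y_{L_{n_k}\cap D_d}^n$ for the destination-$d$ leaf observations in the noisy branch below $k$ and $V\triangleq\{Y_{L_j\cap D_d}^n\}_{j\in M_k\cap G_d}$ for those in the noiseless branches, so that $Y_{L_k\cap D_d}^n=(V,W)$: indeed $L_k\cap D_d$ partitions as $(L_{n_k}\cap D_d)\cup\bigcup_{j\in M_k}(L_j\cap D_d)$, and $L_j\cap D_d=\emptyset$ whenever $j\notin G_d$, so only the branches in $G_d$ carry observations.

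First I would record the structural facts. (i) Because the channel factors as $\prod_k p(y_k|x_{p_k})$ and each relay encodes causally, conditioned on $X_k^n$ the observations in the subtrees rooted at the distinct children of $k$ are mutually independent: the input driving a noiseless child's subtree $T_j$ is $Y_j^n=X_k^n$, the input driving $T_{n_k}$ is $Y_{n_k}^n$ which depends on $X_k^n$ alone, and the internal noises of different subtrees are independent. (ii) For a noiseless child $j$, $X_j^n$ is a deterministic causal function of $Y_j^n=X_k^n$ and the leaves below $j$ depend on $X_k^n$ only through $X_j^n$, giving $X_k^n\to X_j^n\to Y_{L_j\cap D_d}^n$ and hence $I(X_k^n;Y_{L_j\cap D_d}^n)\le I(X_j^n;Y_{L_j\cap D_d}^n)$. (iii) For the noisy child, $X_k^n\to Y_{n_k}^n\to Y_{L_{n_k}\cap D_d}^n$ and $Y_{n_k}^n\to X_{n_k}^n\to Y_{L_{n_k}\cap D_d}^n$ hold for the same reason, so $I(Y_{n_k}^n;Y_{L_{n_k}\cap D_d}^n)\le I(X_{n_k}^n;Y_{L_{n_k}\cap D_d}^n)$. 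I would also flag that every child appearing on a right-hand side is a genuine relay, not a leaf: any leaf in $G_d$ lies in $D_d\subseteq S_d^c$, while the definitions of $B_{S_d,d}$ and $C_{S_d,d}$ force $M_k\cap G_d\subset S_d$ (and $Z_k\cap G_d\subset S_d$ for $C_{S_d,d}$), so the corresponding $X_j^n$ are well defined.

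Inequality (\ref{eqn:cov_rec_1}) is immediate from $I(X_k^n;Y_{L_k\cap D_d}^n)=H(X_k^n)-H(X_k^n|Y_{L_k\cap D_d}^n)\le H(X_k^n)$. For (\ref{eqn:cov_rec_2}) I would write $I(X_k^n;Y_{L_k\cap D_d}^n)-I(X_k^n;W)=I(X_k^n;V\mid W)$ by the chain rule; conditional independence of $V$ and $W$ given $X_k^n$ gives $H(V\mid W,X_k^n)=H(V\mid X_k^n)$, whence $I(X_k^n;V\mid W)\le I(X_k^n;V)$, and conditional independence of the individual noiseless branches together with subadditivity of entropy gives $I(X_k^n;V)\le\sum_{j\in M_k\cap G_d}I(X_k^n;Y_{L_j\cap D_d}^n)$. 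Applying fact (ii) term by term produces exactly $\sum_{j\in M_k\cap G_d}I(X_j^n;Y_{L_j\cap D_d}^n)$.

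For (\ref{eqn:cov_rec_3}) I would first split $I(X_k^n;Y_{L_k\cap D_d}^n)=I(X_k^n;V,W)\le I(X_k^n;V)+I(X_k^n;W)$, once more from conditional independence of $V$ and $W$ given $X_k^n$, and bound $I(X_k^n;V)$ by the noiseless terms as above. The noisy contribution is then $I(X_k^n;W)+I(Y_{n_k}^n;W\mid X_k^n)=I(X_k^n,Y_{n_k}^n;W)=I(Y_{n_k}^n;W)\le I(X_{n_k}^n;W)$, using the chain rule, then the first and second Markov chains of fact (iii); this supplies the $j=n_k$ term, and the two pieces combine into $\sum_{j\in Z_k\cap G_d}I(X_j^n;Y_{L_j\cap D_d}^n)$, the $n_k$ term being absent precisely when $n_k\notin G_d$, in which case $W$ is empty. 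The hard part, and the step I would spend the most care on, is establishing facts (i)--(iii) rigorously from the factorization $\prod_k p(y_k|x_{p_k})$ and the causal relay maps; once these are in hand the remainder is bookkeeping, chiefly tracking which children lie in $G_d$ and invoking the cut constraints together with $D_d\subseteq S_d^c$ to guarantee that each child contributing an $I(X_j^n;\cdot)$ term is a relay rather than a destination leaf.
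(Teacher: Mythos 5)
Your proposal is correct and takes essentially the same route as the paper's proof: the identical chain-rule decompositions, the same Markov chains (conditional independence of the child subtrees given $X_k^n$; data processing via $X_k^n\leftrightarrow X_j^n\leftrightarrow Y_{L_j\cap D_d}^n$ and $X_k^n\leftrightarrow Y_{n_k}^n\leftrightarrow Y_{L_{n_k}\cap D_d}^n$ together with $Y_{n_k}^n\leftrightarrow X_{n_k}^n\leftrightarrow Y_{L_{n_k}\cap D_d}^n$), and the same observation that the nodes in $M_k\cap G_d$ (resp.\ $Z_k\cap G_d$) are non-leaf relays so that the $X_j^n$ appearing on the right-hand sides are well defined. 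The only difference is cosmetic: in part three you split $I(X_k^n;V,W)$ into branch terms before merging $I(X_k^n;W)+I(Y_{n_k}^n;W\mid X_k^n)$ into $I(Y_{n_k}^n;W)$, whereas the paper performs the same manipulations in a slightly different order.
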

The proofs of Lemmas \ref{lemma:basic} and \ref{lemma:coverse_recursive} are in Appendices~\ref{appendix:basic} and~\ref{appendix:converse_recursive}, respectively. From Lemmas \ref{lemma:basic} and \ref{lemma:coverse_recursive}, we have the following corollary.

\begin{corollary}\label{corollary:converse}
We have
\begin{align*}
I(X_1^n;Y^n_{D_d})&\leq \sum_{k\in A_{S_d,d}} \sum_{i=1}^n I(U_{k,i};Y_{n_k,i})  +  H(X_{k,i}|U_{k,i})\cr
&+\sum_{k\in B_{S_d,d}} \sum_{i=1}^n I(U_{k,i};Y_{n_k,i}) +  I(X_{k,i};\hat{Y}_{n_k,i}|U_{k,i}) -  \sum_{k\in C_{S_d,d}} \sum_{i=1}^n I(Y_{n_k,i};\hat{Y}_{n_k,i}|U_{k,i},X_{k,i}).
\end{align*}
\end{corollary}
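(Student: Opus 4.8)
The plan is to bound the left-hand side $I(X_1^n;Y^n_{D_d})$ by a recursion that walks down the tree from the root, peeling off one node of $S_d$ at a time. Since $D_d\subseteq L_1=D$, we have $L_1\cap D_d=D_d$, so it suffices to control the quantity $\Psi_k\triangleq I(X_k^n;Y^n_{L_k\cap D_d})$ at $k=1$. The engine of the recursion is Lemma~\ref{lemma:coverse_recursive}, which expresses $\Psi_k$ in terms of the $\Psi_j$ of its children, while Lemma~\ref{lemma:basic} is used to single-letterize the residual $n$-letter terms that appear after each peeling step.

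Concretely, I would prove by induction (from the leaves of the subtree $S_d$ up to the root) the statement that for every $k\in S_d$,
\begin{align*}
\Psi_k &\leq \sum_{\ell\in A_{S_d,d}\cap T_k}\sum_{i=1}^n I(U_{\ell,i};Y_{n_\ell,i})+H(X_{\ell,i}|U_{\ell,i}) + \sum_{\ell\in B_{S_d,d}\cap T_k}\sum_{i=1}^n I(U_{\ell,i};Y_{n_\ell,i})+I(X_{\ell,i};\hat{Y}_{n_\ell,i}|U_{\ell,i})\\
&\quad - \sum_{\ell\in C_{S_d,d}\cap T_k}\sum_{i=1}^n I(Y_{n_\ell,i};\hat{Y}_{n_\ell,i}|U_{\ell,i},X_{\ell,i}),
\end{align*}
where $T_k$ is the subtree rooted at $k$. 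The inductive step splits into the three cases of Table~\ref{table:cases}. If $k\in A_{S_d,d}$, then $Z_k\subseteq S_d^c$, so $T_k\cap S_d=\{k\}$ and the bound follows from (\ref{eqn:cov_rec_1}) followed by (\ref{eqn:cov_basic1}); this is the terminating case. If $k\in C_{S_d,d}$, I combine (\ref{eqn:cov_rec_3}) with the equality (\ref{eqn:cov_basic3}) to get $\Psi_k\leq \sum_{j\in Z_k\cap G_d}\Psi_j-\sum_i I(Y_{n_k,i};\hat{Y}_{n_k,i}|U_{k,i},X_{k,i})$ and then apply the induction hypothesis to each child term $\Psi_j$. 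If $k\in B_{S_d,d}$, I combine (\ref{eqn:cov_rec_2}) with (\ref{eqn:cov_basic2}) to get $\Psi_k\leq \sum_{j\in M_k\cap G_d}\Psi_j+\sum_i I(U_{k,i};Y_{n_k,i})+I(X_{k,i};\hat{Y}_{n_k,i}|U_{k,i})$, and again invoke the induction hypothesis on the children. Evaluating the resulting statement at $k=1$, where $T_1=[1:N]$, gives exactly the claimed bound.

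The inequalities themselves are immediate once Lemmas~\ref{lemma:basic} and~\ref{lemma:coverse_recursive} are in hand; the real work, and the main obstacle, is the bookkeeping that guarantees the recursion is consistent. I would need to verify three structural facts using the defining conditions of a cut $S_d$ (namely $p_k\in S_d$ whenever $k\in S_d$, and $M_k\cap G_d\subset S_d$ whenever $n_k\in S_d$): first, that $A_{S_d,d}$, $B_{S_d,d}$, $C_{S_d,d}$ genuinely partition $S_d$, i.e. every node of $S_d$ lands in exactly one cell of Table~\ref{table:cases} and the two ``--'' cells are vacuous (the cell with $M_k\cap G_d\subseteq S_d^c$ and $n_k\cap G_d\subset S_d$ is excluded by the cut condition, while $M_k\cap G_d=\emptyset$ together with $n_k\cap G_d=\emptyset$ would force $k\notin G_d$, contradicting $S_d\subset G_d$); second, that the set of children whose $\Psi_j$ terms are generated when peeling $k$, namely $Z_k\cap G_d$ for $k\in C_{S_d,d}$ and $M_k\cap G_d$ for $k\in B_{S_d,d}$, consists precisely of the children of $k$ lying in $S_d$, so the induction hypothesis is always applicable; and third, that these child subtrees together with $\{k\}$ partition $T_k\cap S_d$, so that summing the per-child bounds reproduces the sums over $A_{S_d,d}\cap T_k$, $B_{S_d,d}\cap T_k$, and $C_{S_d,d}\cap T_k$ with no double counting or omission. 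Once these partition properties are established, the telescoping of the recursion is automatic and the corollary follows.
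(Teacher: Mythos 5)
Your proposal is correct and takes essentially the same route as the paper: both proofs combine the single-letterization statements of Lemma~\ref{lemma:basic} with the peeling inequalities of Lemma~\ref{lemma:coverse_recursive} and telescope them along the tree using the same partition of $S_d$ into $A_{S_d,d}$, $B_{S_d,d}$, $C_{S_d,d}$ (with the same vacuity arguments for the ``--'' cells) and the ancestor-closedness of the cut. The only difference is bookkeeping --- you run a bottom-up induction on $T_k\cap S_d$ that interleaves the two lemmas at each node, while the paper aggregates all Lemma~\ref{lemma:basic} corrections upfront via the term $\psi(k)$ and then unwinds a top-down recursion from $k=1$ --- which does not change the substance of the argument.
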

\begin{proof}
We have
\begin{align*}
I(X_1^n;Y^n_{D_d})\leq I(X_1^n;Y^n_{D_d})+\sum_{k\in S_d}\psi(k)+ \sum_{k\in A_{S_d,d}} \sum_{i=1}^n I(U_{k,i};Y_{n_k,i})  +  H(X_{k,i}|U_{k,i})\cr
+\sum_{k\in B_{S_d,d}} \sum_{i=1}^n I(U_{k,i};Y_{n_k,i}) +  I(X_{k,i};\hat{Y}_{n_k,i}|U_{k,i})-  \sum_{k\in C_{S_d,d}} \sum_{i=1}^n I(Y_{n_k,i};\hat{Y}_{n_k,i}|U_{k,i},X_{k,i})
\end{align*}
from Lemma \ref{lemma:basic}, where $\psi(k)$ for $k\in S_d$ is defined as
\begin{align*}
\psi(k)\triangleq
\begin{cases}
-H(X_k^n) &\mbox{if $k\in A_{S_d,d}$}\\
-I(X_k^n;Y_{L_{n_k}\cap D_d}^n) &\mbox{if $k\in B_{S_d,d}$}\\
I(Y_{n_k}^n;Y_{L_{n_k}\cap D_d}^n|X_k^n) &\mbox{if $k\in C_{S_d,d}$}
\end{cases}.
\end{align*}
Now, it remains to show
\begin{align}
I(X_1^n;Y^n_{D_d})+\sum_{k\in S_d}\psi(k)\leq 0. \label{eqn:coro_m}
\end{align}
From Lemma \ref{lemma:coverse_recursive}, we have
\begin{align}
I(X_k^n;Y_{L_k\cap D_d}^n)+\psi(k)\leq \sum_{j\in Z_k\cap S_d} I(X_j^n;Y_{L_j\cap D_d}^n) \label{eqn:bounding_re}
\end{align}
for $k\in S_d$. Using the inequality (\ref{eqn:bounding_re}) recursively for all $k\in S_d$ starting from $k=1$, the inequality (\ref{eqn:coro_m}) is proved from the fact that node $k$ at the boundary of $S_d$ is included in $A_{S_d,d}$ and $Z_k\cap S_d=\emptyset$ for $k\in A_{S_d,d}$.
\end{proof}

Now, we are ready to prove the upper bound in Theorem \ref{thm:tree_multicast}. In the following,  $\epsilon_n$ tends to zero as $n$ tends to infinity. We have
\begin{align}
nR&= H(X_1^n)\cr
&=I(X_1^n;Y^n_{D_d})+H(X_1^n|Y^n_{D_d})\cr
&\overset{(a)}{\leq} I(X_1^n;Y^n_{D_d})+n\epsilon_n \cr
&\overset{(b)}{\leq}n\epsilon_n+ \sum_{k\in A_{S_d,d}} \sum_{i=1}^n I(U_{k,i};Y_{n_k,i})  +  H(X_{k,i}|U_{k,i})\cr &+\sum_{k\in B_{S_d,d}} \sum_{i=1}^n I(U_{k,i};Y_{n_k,i}) +  I(X_{k,i};\hat{Y}_{n_k,i}|U_{k,i})  \cr
&-  \sum_{k\in C_{S_d,d}} \sum_{i=1}^n I(Y_{n_k,i};\hat{Y}_{n_k,i}|U_{k,i},X_{k,i})\nonumber
\end{align}
where $(a)$ is due to Fano's inequaility and $(b)$ is from Corollary \ref{corollary:converse}.

Let $Q$ denote a time-sharing random variable uniformly distributed over $[1:n]$ that is independent of all the other variables. Define random variables $(U_k', X_k,Y_{n_k},\hat{Y}_{n_k}')$ for $k\in [1:N]$ such that
\begin{align*}
&p\left(U_k'=u_k, X_k=x_k,Y_{n_k}=y_{n_k},\hat{Y}'_{n_k}=\hat{y}_{n_k}|Q=i\right)\cr
&=p\left(U_{k,i}=u_k, X_{k,i}=x_k,Y_{n_k,i}=y_{n_k},\hat{Y}_{n_k,i}=\hat{y}_{n_k}\right)
\end{align*}
for $i\in [1:n]$.
Let $U_k\triangleq (U_k', Q)$ and $\hat{Y}_{n_k}\triangleq(\hat{Y}_{n_k}', Q)$ for $k\in [1:N]$.  Then, we have
\begin{align*}
\frac{1}{n}\sum_{i=1}^n I(U_{k,i};Y_{n_k,i})  +  H(X_{k,i}|U_{k,i})&=I(U_k';Y_{n_k}|Q)+H(X_{k}|U'_{k},Q)\cr
&\leq I(U_k;Y_{n_k})+H(X_{k}|U_{k}),
\end{align*}
\begin{align*}
\frac{1}{n}\sum_{i=1}^n I(U_{k,i};Y_{n_k,i}) +  I(X_{k,i};\hat{Y}_{n_k,i}|U_{k,i})&=I(U_{k}';Y_{n_k}|Q) +  I(X_{k};\hat{Y}_{n_k}'|U_{k}',Q)\\
&\leq I(U_{k};Y_{n_k}) +  I(X_{k};\hat{Y}_{n_k}|U_{k}),
\end{align*}
and
\begin{align*}
\frac{1}{n}\sum_{i=1}^n I(Y_{n_k,i};\hat{Y}_{n_k,i}|U_{k,i},X_{k,i})&= I(Y_{n_k};\hat{Y}'_{n_k}|U_{k}',X_{k},Q)\\
&= I(Y_{n_k};\hat{Y}_{n_k}|U_{k},X_{k}).
\end{align*}
Hence, we get
\begin{align}
R-\epsilon_n \leq   \sum_{k\in A_{S_d,d}}I(U_{k};Y_{n_k})+H(X_{k}|U_{k})\cr
+  \sum_{k\in B_{S_d,d}}I(U_{k};Y_{n_k})+I(X_{k};\hat{Y}_{n_k}|U_{k})\cr
-\sum_{k\in C_{S_d,d}}I(Y_{n_k};\hat{Y}_{n_k}|U_{k},X_{k}).\label{eqn:cov_primitive}
\end{align}

Note that only the marginal distributions $p\left(u_{k},x_{k},y_{n_k},\hat{y}_{n_k}\right)$'s for $k\in [1:N]$ are needed to evaluate the right-hand side of (\ref{eqn:cov_primitive}). Thus, we do not lose generality  when we only consider the joint distribution of (\ref{eqn:joint_distr_general}). Since the definition of $\hat{Y}_{n_k}$ for $k\in [1:N]$ depends on $D_d$'s for $d\in [1:K]$,  the minimization over $d\in[1:K]$ has to be outside the maximization over $\prod_{k\in [1:N]}p(\hat{y}_{n_k}|u_k,y_{n_k})$, which results in the upper bound (\ref{eqn:multicast_cov}).
The cardinality bound (\ref{eqn:cardinality}) for $\mathcal{U}_{k}$ and $\mathcal{\hat{Y}}_{n_k}$ for $k\in [1:N]$ can be obtained in a similar way as in~\cite{Kaspi:82}.

\section{Diamond Networks} \label{sec:diamond}
In this section, we present an alternative capacity expression for a simple tree network with a single destination, called a diamond network, in which the root node has one noisy child node and one noiseless child node, each node at the second level has a single noiseless child node, and nodes at the third level form the destination. In the following, nodes 1, 2, and 3 are the source, noisy relay, and noiseless relay, respectively.

The capacity of diamond networks was first characterized by Kang and Ulukus~\cite{KangUlukus:11}.
\begin{theorem}[Kang and Ulukus~\cite{KangUlukus:11}]
The capacity of diamond networks is given as
\begin{align}
\max_{\substack{p(u_1,x_1)p(\hat{y}_2|y_2,u_1):\\ r_2\geq I(Y_2;\hat{Y}_2|U_1,X_1)\\ r_3\geq H(X_1|U_1,\hat{Y}_2)} } \min\{I(U_1;Y_2)+H(X_1|U_1),r_2+r_3-I(Y_2;\hat{Y}_2|U_1,X_1)\}\label{eqn:cap_org}
\end{align}
with cardinalities of alphabets bounded by
\begin{subequations} \label{eqn:cardinality_diamond}
\begin{align}
|\mathcal{U}_1|&\leq |\mathcal{X}_1|+4 \\
|\hat{\mathcal{Y}}_2|&\leq |\mathcal{U}_1||\mathcal{Y}_2|+2\leq |\mathcal{X}_1||\mathcal{Y}_2|+4|\mathcal{Y}_2|+2.
\end{align}
\end{subequations}
\end{theorem}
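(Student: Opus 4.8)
The plan is to obtain (\ref{eqn:cap_org}) as a specialization of Corollary \ref{corollary:cutset}: a diamond network is a single-destination tree network, so its capacity is already given by (\ref{eqn:cap_theorem}), and it suffices to reduce that expression to (\ref{eqn:cap_org}). I would label the source as node $1$, the noisy relay as node $2=n_1$, the noiseless relay as node $3\in M_1$, and their leaf children (forming the single destination $D$) so that the finite link capacities are $r_2=\log|\mathcal{X}_2|$ and $r_3=\log|\mathcal{X}_3|$. The first step is to enumerate the cuts admissible in Corollary \ref{corollary:cutset}: the conditions $1\in S$, $D\subseteq S^c$, $p_k\in S$ whenever $k\in S$, and $M_k\subset S$ whenever $n_k\in S$ (the last forcing $3\in S$ once $2\in S$) leave exactly the three cuts $S=\{1\}$, $S=\{1,3\}$, and $S=\{1,2,3\}$.

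Next I would evaluate (\ref{eqn:cap_theorem}) on each cut using the $A_{S,1},B_{S,1},C_{S,1}$ classification from the proof of Corollary \ref{corollary:cutset}. Since $n_2=n_3=\emptyset$, the variables $U_2,U_3$ and the $\hat Y$-terms for nodes $2,3$ are degenerate, and because the maximizing law (\ref{eqn:joint_distr_general}) is a product over nodes, $(U_2,X_2)$ and $(U_3,X_3)$ are independent of everything else and may be taken uniform, giving $H(X_2)=r_2$ and $H(X_3)=r_3$. With $1\in A$ for $S=\{1\}$, $1\in B$ and $3\in A$ for $S=\{1,3\}$, and $1\in C$ and $2,3\in A$ for $S=\{1,2,3\}$, the three cut values become
\begin{align*}
V_1&=I(U_1;Y_2)+H(X_1|U_1),\\
V_2&=I(U_1;Y_2)+I(X_1;\hat Y_2|U_1)+r_3,\\
V_3&=r_2+r_3-I(Y_2;\hat Y_2|U_1,X_1),
\end{align*}
so Corollary \ref{corollary:cutset} gives the capacity as $\max_{p(u_1,x_1)p(\hat y_2|y_2,u_1)}\min\{V_1,V_2,V_3\}$.

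It then remains to show this equals (\ref{eqn:cap_org}). Writing $a=I(X_1;\hat Y_2|U_1)$ and $b=I(Y_2;\hat Y_2|U_1,X_1)$, the identity $H(X_1|U_1,\hat Y_2)=H(X_1|U_1)-a$ shows that the second constraint of (\ref{eqn:cap_org}), $r_3\geq H(X_1|U_1,\hat Y_2)$, is exactly $a\geq H(X_1|U_1)-r_3$, i.e.\ precisely the condition $V_2\geq V_1$, while the first constraint is $b\leq r_2$. Hence on the feasible set $\mathcal{R}$ of (\ref{eqn:cap_org}) the cut $S=\{1,3\}$ is redundant and $\min\{V_1,V_2,V_3\}=\min\{V_1,V_3\}$, which immediately yields the direction ``capacity $\geq$ (\ref{eqn:cap_org})''. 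The reverse inequality is the main obstacle: I must rule out any gain from choosing $\hat Y_2$ outside $\mathcal{R}$. For fixed $p(u_1,x_1)$, the quantity to maximize is $\min\{V_1,\,I(U_1;Y_2)+a+r_3,\,r_2+r_3-b\}$ over the achievable $(a,b)$-region, where $V_1$ is constant, $V_2$ increases in $a$, $V_3$ decreases in $b$, and finer compression trades $a$ against $b$ along the frontier of that region. I would argue that the optimizing $\hat Y_2$ can always be replaced, without decreasing the min, by one meeting $b=r_2$ (degrade an over-fine $\hat Y_2$, raising $V_3$) or $a=H(X_1|U_1)-r_3$ (drive $V_2$ down to $V_1$), landing in $\mathcal{R}$; establishing this reduction rigorously is the crux of the equivalence. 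Finally, the cardinality bounds (\ref{eqn:cardinality_diamond}) are simply (\ref{eqn:cardinality}) specialized to $k=1$, the only node carrying a non-degenerate $U_k$ and $\hat Y_{n_k}$.
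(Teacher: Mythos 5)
You derive the theorem from Corollary~\ref{corollary:cutset}, which is a genuinely different route from the paper's: the paper does not prove this statement at all but imports it from~\cite{KangUlukus:11}, and in Section~\ref{sec:diamond} it records only that Theorem~\ref{thm:tree_multicast} specializes to the min-cut form (\ref{eqn:cap_mincut}) for diamond networks, with the equality of (\ref{eqn:cap_mincut}) and (\ref{eqn:cap_org}) following indirectly because both are shown (by different papers) to equal the capacity. Up to the point of reaching (\ref{eqn:cap_mincut}) your argument is correct: the three admissible cuts $S=\{1\}$, $S=\{1,3\}$, $S=\{1,2,3\}$ (note $S=\{1,2\}$ is excluded by the condition $M_k\subset S$ if $n_k\in S$), the classifications $1\in A$; $1\in B,\,3\in A$; and $1\in C,\,2,3\in A$; the degeneracy of $U_2,U_3,\hat Y$-terms since $n_2=n_3=\emptyset$; and the choice of uniform independent $X_2,X_3$ giving $H(X_2|U_2)=r_2$, $H(X_3|U_3)=r_3$ all check out and reproduce exactly $V_1,V_2,V_3$ of (\ref{eqn:cap_mincut}). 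Your translation $r_3\geq H(X_1|U_1,\hat Y_2)\Leftrightarrow V_2\geq V_1$ is also correct and yields the direction ``capacity $\geq$ (\ref{eqn:cap_org})'' cleanly.

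The gap is the one you yourself flag: the reverse inequality, i.e.\ the functional identity $\max\min\{V_1,V_2,V_3\}\leq$ (\ref{eqn:cap_org}), is asserted but not proved, and your sketched frontier argument does not work as stated. The two moves you propose are coupled in a way that defeats them: degrading $\hat Y_2$ lowers $b=I(Y_2;\hat Y_2|U_1,X_1)$ (raising $V_3$) but simultaneously lowers $a=I(X_1;\hat Y_2|U_1)$, i.e.\ raises $H(X_1|U_1,\hat Y_2)$, which can violate the constraint $r_3\geq H(X_1|U_1,\hat Y_2)$ that membership in $\mathcal{R}$ requires; refining $\hat Y_2$ does the opposite; and there is no a priori one-dimensional monotone $(a,b)$ frontier to slide along, since varying $p(\hat y_2|u_1,y_2)$ moves $a$ and $b$ jointly and distribution-dependently. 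You also cannot close the loop by citing the converse of~\cite{KangUlukus:11}, since that is the statement being proved. What a rigorous version of your reduction requires is visible in the paper's own Appendix~\ref{appendix:alter}, which proves the analogous equivalence between (\ref{eqn:cap_org}) and (\ref{eqn:cap_alter}): enlarge the auxiliaries by a Bernoulli time-sharing variable $B$ with parameter $\lambda$, e.g.\ letting $(U_1'',X_1'',\hat Y_2'')$ switch between $(U_1,X_1,\hat Y_2)$ and $(X_1,X_1,\emptyset)$ and setting $U_1^*=(U_1'',B)$, $\hat Y_2^*=(\hat Y_2'',B)$ so that the required structure $p(\hat y_2^*|u_1^*,y_2)$ is preserved, then invoke the intermediate value theorem in $\lambda$ and re-verify both constraints of $\mathcal{R}$ after the move (one must also confirm the cardinality bounds (\ref{eqn:cardinality_diamond}) survive the enlargement, via the usual support-lemma argument). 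Without an argument of this type, your proposal establishes only that the capacity is at least (\ref{eqn:cap_org}), not the theorem.
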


Now, the following theorem shows an alternative capacity expression for diamond networks, whose proof is in Appendix~\ref{appendix:alter}.
\begin{theorem}[Alternative expression]\label{thm:alter}
The capacity of diamond networks is given as
\begin{align}
\max_{\substack{p(u_1,x_1)p(\hat{y}_2|y_2,u_1):\\ r_3\geq H(X_1|U_1,\hat{Y}_2)\\r_2+r_3\geq I(U_1;Y_2)+H(X_1|U_1)+I(Y_2;\hat{Y}_2|U_1,X_1)} } I(U_1;Y_2)+H(X_1|U_1)\label{eqn:cap_alter}
\end{align}
with cardinalities of alphabets bounded by (\ref{eqn:cardinality_diamond}).
\end{theorem}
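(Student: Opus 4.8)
The plan is to prove the identity \mbox{(\ref{eqn:cap_org})} $=$ \mbox{(\ref{eqn:cap_alter})} directly as an optimization equivalence, since \cite{KangUlukus:11} already establishes that (\ref{eqn:cap_org}) is the capacity. Throughout I fix the link rates $r_2,r_3$ and abbreviate, for a given $p(u_1,x_1)p(\hat{y}_2|y_2,u_1)$, the quantities $a\triangleq I(U_1;Y_2)+H(X_1|U_1)$, $c\triangleq I(Y_2;\hat{Y}_2|U_1,X_1)$, and $h\triangleq H(X_1|U_1,\hat{Y}_2)$. Then (\ref{eqn:cap_org}) is $\sup\min\{a,\,r_2+r_3-c\}$ over the region $\{c\le r_2,\ h\le r_3\}$, while (\ref{eqn:cap_alter}) is $\sup a$ over $\{a+c\le r_2+r_3,\ h\le r_3\}$; note that on the latter region $\min\{a,\,r_2+r_3-c\}=a$. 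I will repeatedly use two elementary identities: $h=H(X_1|U_1)-I(X_1;\hat{Y}_2|U_1)\le H(X_1|U_1)\le a$, and $a=H(X_1)-I(U_1;X_1|Y_2)$, the latter following from the Markov chain $U_1-X_1-Y_2$.

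For the direction \mbox{(\ref{eqn:cap_alter})}$\le$\mbox{(\ref{eqn:cap_org})}, I would take any $p$ feasible for (\ref{eqn:cap_alter}). If $c\le r_2$, then $p$ is already feasible for (\ref{eqn:cap_org}) and attains $\min\{a,\,r_2+r_3-c\}=a$. If instead $c>r_2$, then $a\le r_2+r_3-c<r_3$, and I would degrade $\hat{Y}_2$ by time-sharing it with a constant symbol so as to lower $c$ to exactly $r_2$; this leaves $a$ unchanged and keeps $h\le H(X_1|U_1)\le a<r_3$, so the degraded distribution is feasible for (\ref{eqn:cap_org}) and again attains $a$. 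Hence every feasible value of (\ref{eqn:cap_alter}) is attained in (\ref{eqn:cap_org}).

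For the reverse direction \mbox{(\ref{eqn:cap_org})}$\le$\mbox{(\ref{eqn:cap_alter})}, I would show that any $p$ feasible for (\ref{eqn:cap_org}) has $\min\{a,\,r_2+r_3-c\}$ no larger than (\ref{eqn:cap_alter}). Writing $v\triangleq r_2+r_3-c$, if $a\le v$ then $p$ is already feasible for (\ref{eqn:cap_alter}) with objective $a=\min\{a,v\}$. The substantive case is $a>v$, where the objective equals $v$ but $p$ itself is infeasible for (\ref{eqn:cap_alter}); here I must re-allocate rate between decode-forward and compress-forward to bring $a$ down to $v$ without violating $h\le r_3$. If $v\ge I(X_1;Y_2)$, I would refine $U_1$ to $\tilde{U}_1=(U_1,V)$, where $V$ reveals $X_1$ on an independent fraction $\theta$ of the time; since $a$ is continuous in $\theta$ and decreases from its current value down to $I(X_1;Y_2)$, the intermediate value theorem furnishes a $\theta$ with $a=v$, and keeping the same $\hat{Y}_2$ leaves $c$ unchanged (as $V$ is a randomized function of $X_1$, independent of $\hat{Y}_2$ given $U_1$) while only decreasing $h$, yielding a point feasible for (\ref{eqn:cap_alter}) with objective $v$. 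If instead $v<I(X_1;Y_2)$, I would take full decode-forward $U_1=X_1$ with a constant $\hat{Y}_2$ (so $c=0$ and $h=0$) and shrink the input distribution toward a point mass, so that $I(X_1;Y_2)$ decreases continuously to $v$; again the intermediate value theorem gives a point feasible for (\ref{eqn:cap_alter}) with objective $v$.

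I expect the main obstacle to be exactly this last case $a>r_2+r_3-c$, where the minimum in (\ref{eqn:cap_org}) is governed by the compression/cut term rather than by $a$. The difficulty is that $U_1$, $\hat{Y}_2$, and the input distribution are coupled through $h=H(X_1|U_1,\hat{Y}_2)$, so naively lowering $a$ can push $h$ above $r_3$. The two identities above are precisely what decouple these effects: refining $U_1$ can only lower $h$, and degrading $\hat{Y}_2$ keeps $h\le H(X_1|U_1)\le a$; verifying these monotonicities together with the continuity needed for the intermediate-value steps is the crux. Finally, I would remark that the cardinality bounds (\ref{eqn:cardinality_diamond}) are inherited unchanged, since none of these manipulations enlarge the auxiliary alphabets beyond what a standard support-reduction argument already guarantees.
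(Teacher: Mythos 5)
Your proposal is correct and follows essentially the same route as the paper's Appendix~D proof: the easy direction plus, for the hard direction, a case split around $I(X_1;Y_2)$ with time-sharing constructions and the intermediate value theorem to hit $R=r_2+r_3-c$ exactly (your handling of the $c>r_2$ case by degrading $\hat{Y}_2$ is just the explicit form of the paper's remark that the $r_2$ constraint in (\ref{eqn:cap_org}) is redundant). The only cosmetic differences are that you keep $\hat{Y}_2$ intact when refining $U_1$ (the paper degrades it to $\lambda c$, which also works) and you deform $p(x_1)$ toward a point mass where the paper time-shares full decode-and-forward with silence; both sweep the same interval and verify the same feasibility conditions.
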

Theorem \ref{thm:alter} shows that we do not lose optimality when the codebook construction of the combination of DF and CF is restricted to  the superposition of $2^{n(I(U_1;Y_2)-\epsilon)}$ `cloud centers' $U_1^n$, i.e., the part of the message decoded by the noisy relay, and $2^{n(H(X_1|U_1)-\epsilon)}$ `satellites' $X_1^n$ for each $U_1^n$, i.e., the remaining part of the message. This means that the optimality of the combination of DF and CF at the noisy relay in diamond networks intuitively makes sense since the relay compresses a noisy observation of almost uncoded information that has no structure. Otherwise, the optimality of \emph{compression} after decoding at the noisy relay, which ignores the codebook structure at the source, would have been counterintuitive.

On the other hand, Theorem~\ref{thm:tree_multicast} gives the following min-cut capacity expression for diamond networks with cardinalities of alphabets bounded by (\ref{eqn:cardinality_diamond}).
\begin{align}
\max_{p(u_1,x_1)p(\hat{y}_2|y_2,u_1)} \min\{I(U_1;Y_2)+H(X_1|U_1),& r_3+I(U_1;Y_2)+I(X_1;\hat{Y}_2|U_1),\cr
&~~~~~~~~~~~r_2+r_3-I(Y_2;\hat{Y}_2|U_1,X_1)\} \label{eqn:cap_mincut}
\end{align}
We note that the relationship between the two capacity characterizations (\ref{eqn:cap_alter}) and (\ref{eqn:cap_mincut}) is similar to that between the two equivalent achievable rate characterizations of CF for 3-node relay networks in~\cite{Gamal:79}~and~\cite{GamalMohseniZahedi:06}, which are given by (\ref{eqn:general1}) and (\ref{eqn:general2}), respectively. Here, node indices follow the convention that nodes 1, 2, and 3 are the source, relay, and destination, respectively.
\begin{align}
\max_{\substack{p(x_1)p(x_2)p(\hat{y}_2|y_2,x_2): \\ I(X_2;Y_3)\geq I(Y_2;\hat{Y}_2|X_2,Y_3)}}&I(X_1;\hat{Y}_2,Y_3|X_2) \label{eqn:general1}\\
\max_{p(x_1)p(x_2)p(\hat{y}_2|y_2,x_2)} \min\{I(X_1;\hat{Y}_2,Y_3|X_2),&I(X_1,X_2;Y_3)-I(Y_2;\hat{Y}_2|X_1,X_2,Y_3)\}\label{eqn:general2}
\end{align}

\section{Conclusion} \label{sec:conclusion}
We characterized the capacity of a class of multicast tree networks having an arbitrary number of nodes, which includes the class of diamond networks studied in \cite{KangUlukus:11} as a special case. For achievability, we constructed a robust coding scheme that uses a combination of DF and CF in every noisy relay and a random binning in every noiseless relay in a way that the codebook constructions and relay operations are independent for each node. For converse, we used a novel technique of iteratively manipulating inequalities exploiting the tree topology. For diamond networks, we showed that the optimality of the combination of DF and CF at the noisy relay is intuitively convincing by proving that it does not lose optimality to restrict the coding scheme such that what is compressed after decoding at the noisy relay is a noisy observation of almost uncoded information.


\appendices

\section{Proof of Corollary \ref{corollary:tightness}} \label{appendix:tightness}
Let $C_{p_kk}\triangleq \max_{p(x_{p_k})}I(X_{p_k};Y_k)$ for $k\in [2:N]$ denote the point-to-point capacity between nodes $p_k$ and $k$ and let $C(k,d)$ for $d\in[1:K]$ and $k$ such that $L_k\supseteq D_d$ denote the capacity of tree network $T_{k}$ with a source $k$ and a single destination $D_d$. For a lower bound on the right-hand side of (\ref{eqn:multicast_ach}), let us choose the joint distribution $\prod_{k\in [1:N]}p(u_k,x_k)p(\hat{y}_{n_k}|u_k,y_{n_k})$ as follows:
\begin{itemize}
\item For $k$ such that $k\in T_{a_d}$ for some $d\in[1:K]$, choose $p(u_k,x_k)p(\hat{y}_{n_k}|u_k,y_{n_k})$ that achieves $C(a_d, d)$.
\item For $k$ such that $k\notin T_{a_d}$ for all $d\in [1:K]$ and $n_k\neq \emptyset$, choose $p(x_k)$ that achieves $C_{kn_k}$ and let $U_k=X_k$ and $\hat{Y}_{n_k}=\emptyset$.
\item For $k$ such that $k\notin T_{a_d}$ for all $d\in [1:K]$ and $n_k= \emptyset$, let $X_k$ uniformly distributed over $\mathcal{X}_k$ and let $U_k=\hat{Y}_{n_k}=\emptyset$.
\end{itemize}
For the above choice of distribution, we obtain the following lower bound.
\begin{align*}
C\geq&\min \left\{\min_{k\notin \bigcup_{d\in [1:K]}T_{a_d}}\min_{j\in Z_k}C_{kj}, \min_{d\in [1:K]} C(a_d,d)\right\}\cr
=&\min \left\{\min_{d\in [1:K]}\min_{k\in G_d \cap T_{a_d}^c}\min_{j\in Z_k\cap G_d}C_{kj}, \min_{d\in [1:K]} C(a_d,d)\right\}\cr
=&\min_{d\in [1:K]}\min  \left\{\min_{k\in G_d \cap T_{a_d}^c}\min_{j\in Z_k\cap G_d}C_{kj}, C(a_d,d)\right\}\cr
=&\min_{d\in [1:K]}C(1, d).
\end{align*}

Now, note that the right-hand side of (\ref{eqn:multicast_cov}) is clearly upper-bounded by $\min_{d\in [1:K]}C(1, d)$. Hence, the lower and upper bounds in Theorem \ref{thm:tree_multicast} coincide. \endproof

\section{Proof of Lemma~\ref{lemma:basic}}\label{appendix:basic}
Consider $k\in [1:N]$. We have
\begin{align*}
&H(X_k^n)\cr
&=\sum_{i=1}^{n}H(X_{k,i}|X_{k,i+1}^n)\cr
&\leq \sum_{i=1}^{n}I(Y_{n_k}^{i-1};Y_{n_k,i})+H(X_{k,i}|X_{k,i+1}^n)\cr
&=\sum_{i=1}^{n}I(X_{k, i+1}^n,Y_{n_k}^{i-1};Y_{n_k,i})-I(X_{k, i+1}^{n};Y_{n_k,i}|Y_{n_k}^{i-1})+H(X_{k,i}|X_{k,i+1}^n,Y_{n_k}^{i-1})+I(X_{k,i};Y_{n_k}^{i-1}|X_{k,i+1}^n)\cr
&\overset{(a)}{=}\sum_{i=1}^{n}I(X_{k, i+1}^n,Y_{n_k}^{i-1};Y_{n_k,i})+H(X_{k,i}|X_{k,i+1}^n,Y_{n_k}^{i-1})\\
&=\sum_{i=1}^{n}I(U_{k,i};Y_{n_k,i})+H(X_{k,i}|U_{k,i})
\end{align*}
where $(a)$ is from Csisz\'{a}r sum identity~\cite{CsiszarKorner:78}, which proves (\ref{eqn:cov_basic1}).

We have
\begin{align}
H(X_k^n|Y^n_{L_{n_k}\cap D_d})&=\sum^n_{i=1}H(X_{k,i}|X_{k,i+1}^n, Y^n_{L_{n_k}\cap D_d})\cr
&\geq \sum^n_{i=1}H(X_{k,i}|X_{k,i+1}^n, Y_{n_k}^{i-1}, Y^n_{L_{n_k}\cap D_d})\cr
&= \sum^n_{i=1}H(X_{k,i}|U_{k,i}, \hat{Y}_{{n_k},i}).\label{eqn:basic_second_proof}
\end{align}
Note that combining (\ref{eqn:basic_second_proof}) with (\ref{eqn:cov_basic1}) proves (\ref{eqn:cov_basic2}).

We have
\begin{align*}
I(Y_{n_k}^n;Y_{L_{n_k}\cap D_d}^n|X_k^n)&=\sum_{i=1}^nI(Y_{n_k,i};Y_{L_{n_k}\cap D_d}^n|X_k^n,Y_{n_k}^{i-1})\cr
&\overset{(a)}{=}\sum_{i=1}^nI(Y_{n_k,i};Y_{L_{n_k}\cap D_d}^n|X_{k,i}^n,Y_{n_k}^{i-1})\\
&=\sum_{i=1}^nI(Y_{n_k,i};\hat{Y}_{n_k,i }|U_{k,i},X_{k,i})\nonumber
\end{align*}
where $(a)$ is from the following Markov chains:
\begin{align*}
X_k^{i-1}&\leftrightarrow(X_{k,i}^n,Y_{n_k}^{i-1})\leftrightarrow Y_{L_{n_k}\cap D_d}^n\\
X_k^{i-1}&\leftrightarrow(X_{k,i}^n,Y_{n_k}^{i})\leftrightarrow Y_{L_{n_k}\cap D_d}^n,
\end{align*}
which proves (\ref{eqn:cov_basic3}).
\endproof

\section{Proof of Lemma~\ref{lemma:coverse_recursive}}\label{appendix:converse_recursive}
For $k\in [1:N]$, the inequality (\ref{eqn:cov_rec_1}) holds trivially.

For $k\in B_{S_d,d}$, we have
\begin{align*}
I(X_k^n;Y_{L_k\cap D_d}^n)-I(X_k^n;Y_{L_{n_k}\cap D_d}^n)&=I(X_k^n;Y_{L_k\cap L_{n_k}^c\cap D_d}^n|Y_{L_{n_k}\cap D_d}^n)\cr
&\overset{(a)}{\leq} I(X_k^n;Y_{L_k\cap L_{n_k}^c\cap D_d}^n)\\
&=I(X_k^n;Y_{\bigcup_{j\in M_k}(L_j\cap D_d)}^n)\\
&\overset{(b)}{\leq} \sum_{j\in M_k}I(X_k^n;Y_{L_j\cap D_d}^n)\\
&\overset{(c)}{=} \sum_{j\in M_k\cap G_d}I(X_k^n;Y_{L_j\cap D_d}^n)\\
&\overset{(d)}{\leq} \sum_{j\in M_k\cap G_d}I(X_j^n;Y_{L_j\cap D_d}^n)
\end{align*}
where $(a)$ is from the Markov chain
\begin{align}
Y_{L_k\cap L_{n_k}^c\cap D_d}^n\leftrightarrow X_k^n \leftrightarrow Y_{L_{n_k}\cap D_d}^n, \label{eqn:cov_rec_111}
\end{align}
$(b)$ is from the Markov chain
\begin{align}
Y_{\bigcup_{m\in M_k, m<j}L_m\cap D_d}^n\leftrightarrow X_k^n\leftrightarrow Y_{L_j\cap D_d}^n \label{eqn:cov_rec_121}
\end{align}
for $j\in M_k$, $(c)$ is because $L_j\cap D_d=\emptyset$ for $j\notin G_d$, and $(d)$ is from the following Markov chain
\begin{align}
X_k^n\leftrightarrow X_j^n\leftrightarrow Y_{L_j\cap D_d}^n \label{eqn:cov_rec_131}
\end{align}
for $j\in M_k\cap G_d$.
Note that (\ref{eqn:cov_rec_131}) holds since $j\in M_k\cap G_d$ is not a leaf node from the definition of $B_{S_d,d}$. Thus, (\ref{eqn:cov_rec_2}) is proved.

For $k\in C_{S_d,d}$, we get
\begin{align*}
&I(X_k^n;Y_{L_k\cap D_d}^n)+I(Y_{n_k}^n;Y_{L_{n_k}\cap D_d}^n|X_k^n)\cr
&=I(X_k^n;Y_{L_{n_k}\cap D_d}^n)+I(X_k^n;Y_{L_k\cap L_{n_k}^c\cap D_d}^n|Y_{L_{n_k}\cap D_d}^n)+I(Y_{n_k}^n;Y_{L_{n_k}\cap D_d}^n|X_k^n)\cr
&=I(X_k^n,Y_{n_k}^n;Y_{L_{n_k}\cap D_d}^n)+I(X_k^n;Y_{L_k\cap L_{n_k}^c\cap D_d}^n|Y_{L_{n_k}\cap D_d}^n)\cr
&\overset{(a)}{=}I(Y_{n_k}^n;Y_{L_{n_k}\cap D_d}^n)+I(X_k^n;Y_{L_k\cap L_{n_k}^c\cap D_d}^n|Y_{L_{n_k}\cap D_d}^n)\\
&\overset{(b)}{\leq} I(Y_{n_k}^n;Y_{L_{n_k}\cap D_d}^n)+\sum_{j\in M_k}I(X_k^n;Y_{L_j\cap D_d}^n)\\
&\overset{(c)}{\leq} I(Y_{n_k\cap G_d}^n;Y_{L_{n_k}\cap D_d}^n)+\sum_{j\in M_k\cap G_d}I(X_k^n;Y_{L_j\cap D_d}^n)\\
&\overset{(d)}{\leq} \sum_{j\in Z_k\cap G_d}I(X_j^n;Y_{L_j\cap D_d}^n)
\end{align*}
where $(a)$ is from the Markov chain
\begin{align*}
X_k^n\leftrightarrow Y_{n_k}^n\leftrightarrow Y_{L_{n_k}\cap D_d}^n,
\end{align*}
$(b)$ is from the Markov chains (\ref{eqn:cov_rec_111}) and (\ref{eqn:cov_rec_121}), $(c)$ is because $L_j\cap D_d=\emptyset$ for $j\notin G_d$, and $(d)$ is from the Markov chains (\ref{eqn:cov_rec_131}) and
\begin{align}
Y_{n_k\cap G_d}^n\leftrightarrow X_{n_k\cap G_d}^n\leftrightarrow Y_{L_{n_k}\cap D_d}^n. \label{eqn:markov}
\end{align}
Note that (\ref{eqn:cov_rec_131}) and (\ref{eqn:markov}) hold since $n_k\cap G_d$ and $j\in M_k\cap G_d$ are not leaf nodes from the definition of $C_{S_d,d}$. Thus, (\ref{eqn:cov_rec_3}) is proved. \endproof

\section{Proof of Theorem~\ref{thm:alter}} \label{appendix:alter}
Let us note that the constraint on $r_2$ in (\ref{eqn:cap_org}) can be easily verified to be redundant. Fix $r_2$ and $r_3$. Let $R_1$ and $R_2$ denote (\ref{eqn:cap_org}) without the constraint on $r_2$ and (\ref{eqn:cap_alter}), respectively. It is trivial to show $R_2\leq R_1$. To show $R_1\leq R_2$, it is enough to show that for all $p(u_1,x_1)p(\hat{y_2}|u_1,y_2)$ such that
$R<I(U_1;Y_2)+H(X_1|U_1)$ and $r_3\geq H(X_1|U_1,\hat{Y}_2)$, where $R\triangleq r_2+r_3-I(Y_2;\hat{Y}_2|U_1,X_1)$, there exists $p(u_1^*,x_1^*)p(\hat{y_2}^*|u_1^*,y_2)$ that satisfies
\begin{subequations} \label{eqn:another}
\begin{align}
R&= I(U_1^*;Y_2)+H(X_1^*|U_1^*),\label{eqn:another1}\\
R&\leq r_2+r_3-I(Y_2;\hat{Y}_2^*|U_1^*, X_1^*), \label{eqn:another2}\\
r_3&\geq H(X_1^*|U_1^*,\hat{Y}_2^*).\label{eqn:another3}
\end{align}
\end{subequations}
Now, consider a joint distribution of $p(u_1,x_1)p(\hat{y_2}|u_1,y_2)$ such that $R<I(U_1;Y_2)+H(X_1|U_1)$ and $r_3\geq H(X_1|U_1,\hat{Y}_2)$. Let $B$ denote a Bernoulli random variable with parameter $\lambda\in [0,1]$. Let $(U_1'', X_1'', \hat{Y}_2'')$ and $(U_1''', X_1''', \hat{Y}_2''')$  denote the triplets of random variables given as
\begin{align*}
(U_1'', X_1'', \hat{Y}_2'')=
\begin{cases}
(U_1,X_1,\hat{Y}_2) &\mbox{ if $B=1$} \\
(X_1,X_1,\emptyset) &\mbox{ if $B=0$}
\end{cases}
,~(U_1''', X_1''', \hat{Y}_2''')=
\begin{cases}
(\emptyset,\emptyset,\emptyset)& \mbox{ if $B=1$} \\
(X_1,X_1,\emptyset) &\mbox{ if $B=0$}
\end{cases}.
\end{align*}
We will show the existence of  $p(u_1^*,x_1^*)p(\hat{y_2}^*|u_1^*,y_2)$ that satisfies (\ref{eqn:another}) separately for the cases of $R>I(X_1;Y_2)$ and $R\leq I(X_1;Y_2)$. First, consider the case of $R>I(X_1;Y_2)$. Let $U_1^*=(U_1'',B)$, $X_1^*=X_1''$, and $\hat{Y}_2^*=(\hat{Y}_2'',B)$. Note that $I(U_1^*; Y_2)+H(X_1^*|U_1^*)$ is a continuous function of $\lambda$ and becomes $I(U_1;Y_2)+H(X_1|U_1)$ and $I(X_1;Y_2)$ when $\lambda=1$ and $\lambda =0$, respectively. From the intermediate value theorem, there exists $\lambda\in [0,1]$ such that $R=I(U_1^*; Y_2)+H(X_1^*|U_1^*)$. Furthermore, (\ref{eqn:another2}) and (\ref{eqn:another3}) are satisfied from
\begin{align*}
I(Y_2;\hat{Y}_2^*| U_1^*,X_1^*)&=I(Y_2;\hat{Y}_2''|U_1'',X_1'', B)\cr
&=\lambda I(Y_2;\hat{Y}_2|U_1,X_1) \cr
&\leq I(Y_2;\hat{Y}_2|U_1,X_1)
\end{align*}
and
\begin{align*}
H(X_1^*|U_1^*,\hat{Y}_2^*)&=H(X_1''|U_1'',\hat{Y}_2'', B)\cr
&=\lambda H(X_1|U_1,\hat{Y}_2)  \cr
&\leq H(X_1|U_1,\hat{Y}_2),
\end{align*}
respectively.

Next, consider the case of $R\leq I(X_1;Y_2)$. Let $U_1^*=(U_1''',B)$, $X_1^*=X_1'''$, and $\hat{Y}_2^*=(\hat{Y}_2''',B)$. Note that $I(U_1^*; Y_2)+H(X_1^*|U_1^*)$ is a continuous function of $\lambda$ and becomes $0$ and $I(X_1;Y_2)$ when $\lambda=1$ and $\lambda =0$, respectively. From the intermediate value theorem, there exists $\lambda\in [0,1]$ such that $R= I(U_1^*; Y_2)+H(X_1^*|U_1^*)$. Furthermore, (\ref{eqn:another2}) and (\ref{eqn:another3}) are satisfied from
\begin{align*}
I(Y_2;\hat{Y}_2^*|U_1^*,X_1^*)=I(Y_2;\hat{Y}_2'''|U_1''', X_1''',B)=0
\end{align*}
and
\begin{align*}
H(X_1^*|U_1^*,\hat{Y}_2^*)=H(X_1'''|U_1''',\hat{Y}_2''', B)=0,
\end{align*}
respectively.
\endproof

\bibliographystyle{IEEEtran}

\end{document}